\documentclass[12pt,final]{article}
\usepackage{amssymb,amsmath,amsthm}
\usepackage{epsfig}
\usepackage{graphics}
\oddsidemargin 0.0in \textwidth 6.4in \topmargin 0.0in \headheight 0.0in \headsep 0.0in \textheight 9.0in

\pagestyle{empty}

\newtheorem{theorem}{Theorem}[section]
\newtheorem{lemma}[theorem]{Lemma}

\newtheorem{definition}[theorem]{Definition}

\newtheorem{corollary}[theorem]{Corollary}
\newtheorem{example}[theorem]{Example}

\begin{document}
\title{Fundamental Constraints on Multicast Capacity Regions}
\author{Leonard Grokop \quad David N. C. Tse\\
Department of Electrical Engineering and Computer Sciences\\
University of California\\
Berkeley, CA 94720, USA\\
\{lgrokop,dtse\}@eecs.berkeley.edu}
\maketitle

\begin{abstract}
Much of the existing work on the broadcast channel focuses only on the sending of private messages. In this work we
examine the scenario where the sender also wishes to transmit common messages to subsets of receivers. For an $L$-user
broadcast channel there are $2^L-1$ subsets of receivers and correspondingly $2^L-1$ independent messages. The set of
achievable rates for this channel is a $2^L-1$-dimensional region. There are fundamental constraints on the geometry of
this region. For example, observe that if the transmitter is able to simultaneously send $L$ rate-one private messages,
errorfree to all receivers, then by sending the same information in each message, it must be able to send a single
rate-one common message, errorfree to all receivers. This swapping of private and common messages illustrates that for
any broadcast channel, the inclusion of a point ${\bf R^*}$ in the achievable rate region implies the achievability of
a set of other points that are not merely componentwise less than ${\bf R^*}$. We formerly define this set and
characterize it for $L=2$ and $L=3$. Whereas for $L=2$ all the points in the set arise only from operations relating to
swapping private and common messages, for $L=3$ a form of network coding is required.
\end{abstract}

\section{Introduction}

\noindent The broadcast channel has predominantly been studied in the context of unicast messaging, where the
transmitter wishes to send one private message to each of the $L$ receivers (see \cite{cover} for example). We refer to
this as unicasting. The transmitter may however wish to send different messages to different subsets of receivers. We
refer to this as multicasting. The most general multicast structure comprises of $2^L-1$ messages (the powerset). For
$L=2$ there are three messages, one required only by the first receiver, one required only by the second receiver, and
one required by both receivers.

The multicast capacity region for a broadcast channel is the set of $2^L-1$-dimensional rate vectors that are
achievable. For $L=2$ this is the set of achievable rate vectors $(R_{1},R_{2},R_{12})$, where $R_{12}$ denotes the
rate of the common message. One question of interest is, can the multicast capacity region be inferred from the unicast
capacity region? That is, can we always compute the multicast capacity region from the unicast capacity region, i.e.
without knowing the structure of the channel? For certain broadcast channels this is true, although it is not true in
general. Thus the multicast capacity region provides additional information about the communication limits of the
channel beyond that of the unicast capacity region.

Multicasting has received significant attention in the network-coding literature. In \cite{NetworkCoding} and \cite{KM}
the maximum rate at which a common message can be sent from a source node through a network of directed noiseless links
to a collection of sink nodes, is shown to equal the minimum-cut of the associated graph. In \cite{ErezFeder} and
\cite{Yeung} the multicast capacity region for one-source-two-sink networks is fully characterized.\footnote {To be
more precise, we define the multicast capacity region of a network as the convex-hull of the union of all multicast
capacity regions of broadcast channels that arise from specifying the encoding and decoding operations at intermediate
nodes in the network.} It is again given by the minimum-cuts of the associated graph. For three or more sinks this is
not the case and the problem is open. In this exposition we shed light on it by characterizing some of the structure
for three-sink networks.

There is an oddity to multicasting. Suppose we have a two-user broadcast channel that can support a rate vector
$(1,1,1)$. That is the transmitter can simultaneously deliver one bit of private information to the first receiver, one
bit of private information to the second receiver, and one bit of common information to both receivers. An important
point to clarify is that there is {\it no secrecy requirement} --``private'' information sent to the first receiver may
or may not be decodable by the second receiver and vice versa. Then the channel can also support a rate vector
$(2,1,0)$. The transmitter merely uses the common bit to send private information to the first receiver. Ofcourse the
second receiver is capable of decoding this bit too, but the information is of no interest to it. By symmetry the
achievability of rate vector $(1,1,1)$ also implies the achievability of rate vector $(1,2,0)$. There is one more
implication in this vein: the achievability of $(1,1,1)$ implies the achievability of $(0,0,2)$. The reasoning is
similar. The transmitter sends the same information on the two private bits. In this way the first user receives the
same private bit as the second user, in addition to the same common bit. Thus two common bits have been sent. These
three manipulations are summarized in figure \ref{fig:BC_example} as {\it extremal rays} stemming from $(1,1,1)$ and
represent three distinct encoding/decoding operations that can always be performed, regardless of the structure of the
broadcast channel. In this sense they are universal. By time-sharing one can achieve any point in the polytope
indicated in figure \ref{fig:BC_example}. To summarize: if a rate-vector $(1,1,1)$ is achievable, so must be the region
illustrated, regardless of the channel. Is this set of operations complete? Put in reverse, are there any rate vectors
outside the polytope in figure \ref{fig:BC_example} that are achievable on for all broadcast channels for which
$(1,1,1)$ is achievable? The answer is that there are not --there exists a broadcast channel where the rate vector
$(1,1,1)$ is achievable, but no rate vector outside the polytope in figure \ref{fig:BC_example} is. Thus for the
two-user broadcast channels the three operations discussed form a complete set -they are the only {\it distinct
universal encoding/decoding operations}.

It is straightforward to generalize these operations to broadcast channels with an arbitrary number of users. Consider
for example the three user broadcast channel. There are seven messages. Suppose a rate vector
$(R_1,R_2,R_3,R_{12},R_{13},R_{23},R_{123}) = (1,1,1,1,1,1,1)$ is achievable (for example, $R_{13}$ represents the rate
of the message intended for receivers 1 and 3). Then for any two subsets of receivers ${\cal I} \subset {\cal J}$ we
can perform the operation $R_{\cal I} \rightarrow R_{\cal I} + 1, R_{\cal J} \rightarrow R_{\cal J} - 1$, and for any
two subsets of receivers ${\cal I} \neq {\cal J}$ we can perform the operation $R_{\cal I} \rightarrow R_{\cal I} - 1,
R_{\cal J} \rightarrow R_{\cal J} - 1, R_{{\cal I}\cup {\cal J}} \rightarrow R_{{\cal I}\cup {\cal J}} + 1$. For
instance we may swap the first and second receivers' private bits for a common bit that is sent to the pair, so that
the rate vector $(0,0,1,2,1,1,1)$ is achieved. Similarly the rate vector $(0,1,1,1,1,0,2)$ can be achieved by using the
first receivers private bit and the bit common to the second and third receivers, to send information common to all
three receivers. It can be shown that the number of distinct operations of this form is 15. That is, if the rate vector
$(1,1,1,1,1,1)$ is achievable, so is the set of points contained within a 15-edged polytope, which is the
generalization to $L=3$ of the polytope in figure \ref{fig:BC_example}.

Again we ask the question, is this set of operations complete? Are there any points outside this 15-edged polytope that
are universally achievable on any three-user broadcast channel? The answer, perhaps surprisingly, is yes. There exists
a sixteenth distinct universal encoding/decoding operation. It does not involve a mere relabeling of common and private
bits. It enables the rate vector $(1,1,1,0,0,0,3)$ to be achieved. This new operation together with the fifteen trivial
ones forms the complete set of distinct universal encoding/decoding operations for $L=3$. That is, all other rate
vectors universally achievable from $(1,1,1,1,1,1,1)$ can be achieved by time sharing between these 16 distinct
universal encoding/decoding operations.

Now we turn to the multiple access channel (MAC) with $L$ users. The MAC has also typically been studied in the context
of unicast messaging where it's capacity region has in many cases been completely characterized. For multicasting the
capacity of the discrete memoryless MAC is computed in \cite{SW} and a conjecture regarding the generalization of this
result to an arbitrary number of users is given.

Let us apply the reasoning we applied above for the broadcast channel, to the MAC. Consider a two user MAC. Each
transmitter wishes to send a private message of rate $R_i$ to the receiver for $i\in\{1,2\}$. In addition there is a
common message of rate $R_{12}$ that both transmitters share, and desire to be sent to the receiver. Suppose for a
given MAC a rate vector $(R_1,R_2,R_{12})=(1,1,1)$ is achievable. Then the first transmitter could just label its
rate-one bit stream as common information and send it to the transmitter. Thus the rate vector $(0,1,2)$ is also
achievable. By symmetry the second transmitter could do the same so $(1,0,2)$ is achievable too. Are there any other
operations that tradeoff between elements of the rate vector?\footnote{We could combine these two arguments to conclude
$(0,0,3)$ is achievable but we will not be interested in this operation as it can be expressed as a linear combination
of others.} The answer is no. For the broadcast channel we could swap common information for private, but not so for
the MAC. More specifically we cannot relabel common information as private, as a common bitstream may require both
transmitters have access to it in order for it to be passed to the receiver. A private bitstream assumes only a single
transmitter has access to it. The $(1,1,1)$-multicast region for the two-user MAC is plotted in figure
\ref{fig:MAC_example}. There are three extremal rays and correspondingly three distinct universal/encoding decoding
operations. The first two are stated above and the third consists of merely lowering the common rate so as to arrive at
the point $(1,1,0)$.

Unlike the broadcast channel, this structure directly generalizes to $L$ users. For three users there are ten universal
encoding/decoding operations. Six result from relabeling private information as pairwise. Three result from relabeling
pairwise as common and the last results from lowering the common rate. Thus the multicast capacity region of the
multiple access channel has a less intricate structure than that of the broadcast.

In this paper we characterize the complete set of distinct universal encoding/decoding operations and the associated
region of achievable rate vectors, for both the broadcast channel and the MAC channel, for $L=3$. In essence this is a
characterization of the universal constraints on the multicast capacity region of these channels.

Section II describes the notation we use. In section III we describe the problem in detail. Section IV presents the
results and section V and VI the proofs.

\begin{figure}
\centering
\includegraphics[width=300pt]{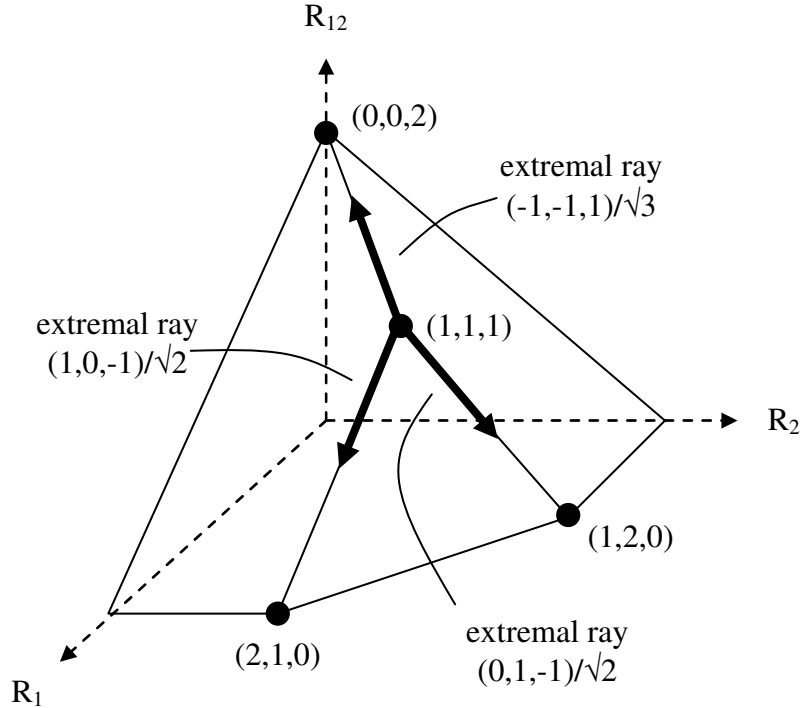}
\caption{The $(1,1,1)$-multicast region for the broadcast channel, $L=2$.} \label{fig:BC_example}
\end{figure}

\begin{figure}
\centering
\includegraphics[width=300pt]{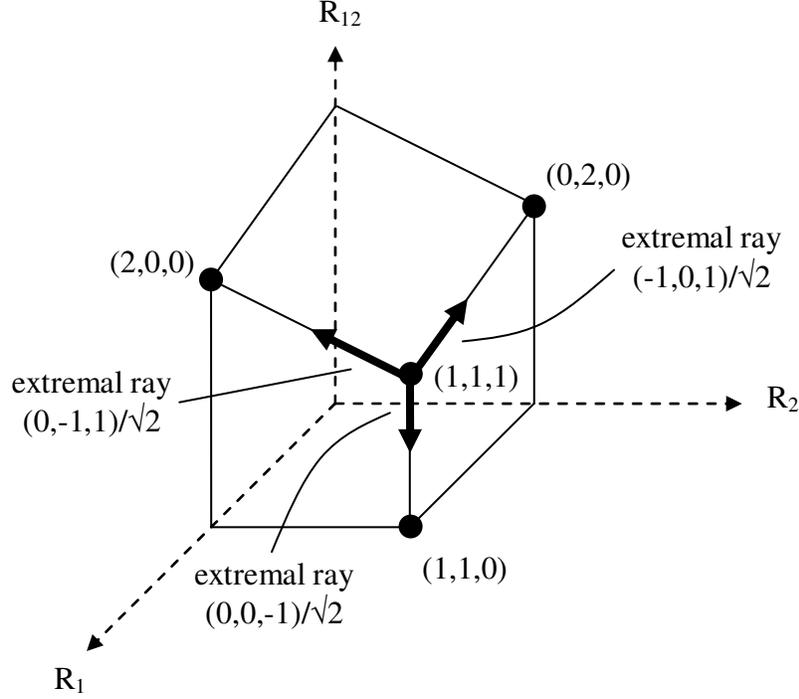}
\caption{The $(1,1,1)$-multicast region for the multiple access channel, $L=2$.} \label{fig:MAC_example}
\end{figure}

\section{Preliminary Notation}

We briefly describe some of the notation that will be used. Typically ${\cal I}$ and ${\cal J}$ will be used to denote
subsets of $\{1,2,3\}$. For example we may have ${\cal I} = \{2,3\}$, which would imply $R_{\cal I} \equiv R_{\{2,3\}}
\equiv R_{23}$. Rates in bold font represent tuples, for example we may have ${\bf R} = (R_1,R_2,R_{12})$. Elements of
time series are indicated by a index in parentheses following the variable, for example $Y(i)$. An entire time series
is represented by bold font, for example ${\bf W}_1 = [W_1(1),\dots,W_1(n)]$ If ${\cal S}$ is a set then $2^S$ denotes
the powerset (the set of all subsets of $S$) excluding the nullset, e.g. if $S = \{1,2\}$ then $2^S \equiv
\{\{1\},\{2\},\{1,2\}\}$. We denote the nullset by $\phi$.  The symbol $\preceq$ denotes element-wise inequality.

\section{Problem Setup}

Consider a broadcast channel with three receivers. The input alphabet is denoted $\cal X$ and the output alphabets
${\cal Y}_1, {\cal Y}_2, {\cal Y}_3$. The probability transition function is $p(y_1,y_2,y_3|x)$. The message vector is
\begin{equation*}
(W_1,W_2,W_3,W_{12},W_{13},W_{23},W_{123}).
\end{equation*}

\noindent The subscript denotes the subset of receivers for which the message in intended, for example message $W_{23}$
is intended for receivers 2 and 3. Denote the rate vector ${\bf R} = (R_1,R_2,R_3,R_{12},R_{13},R_{23},R_{123})$. A
$(2^{n{\bf R}},n)$ code consists of an encoder
\begin{equation*}
x^n: \prod_{{\cal I} \subseteq \{1,2,3\}} \{1,\dots,2^{nR_{\cal I}}\} \rightarrow {\cal X}^n
\end{equation*}

\noindent and twelve decoders
\begin{align*}
&{\hat W}_{i,{\cal I}}: {\cal Y}_i^n \rightarrow 2^{nR_{\cal I}}
\end{align*}

\noindent where $i \in \{1,2,3\}$ denotes the receiver and ${\cal I} \subseteq \{1,2,3\}$ with $i \in \cal I$ denotes
the message index. Thus each receiver decodes four messages (the first receiver decodes $W_1,W_{12},W_{13},W_{123}$,
etc...). The probability of error $P_e^{(n)}$ is defined to be the probability that at least one of the decoded
messages is not equal to the transmitted message, i.e.
\begin{equation*}
P_e^{(n)} = P\left( \bigcup_{\scriptsize \begin{array}{c}
                               {\cal I} \subseteq \{1,2,3\} \\
                               \text{s.t. } i \in {\cal I}
                             \end{array}}
\left\{ {\hat W}_{i,{\cal I}}(Y_i^n) \neq W_{i,{\cal I}} \right\} \right).
\end{equation*}

\noindent where the seven messages are assumed to be mutually independent and uniformly distributed over $\prod_{{\cal
I} \in \{1,2,3\}} \{1,\dots,2^{nR_{\cal I}}\}$.

\begin{figure*}
\centering
\includegraphics[width=450pt]{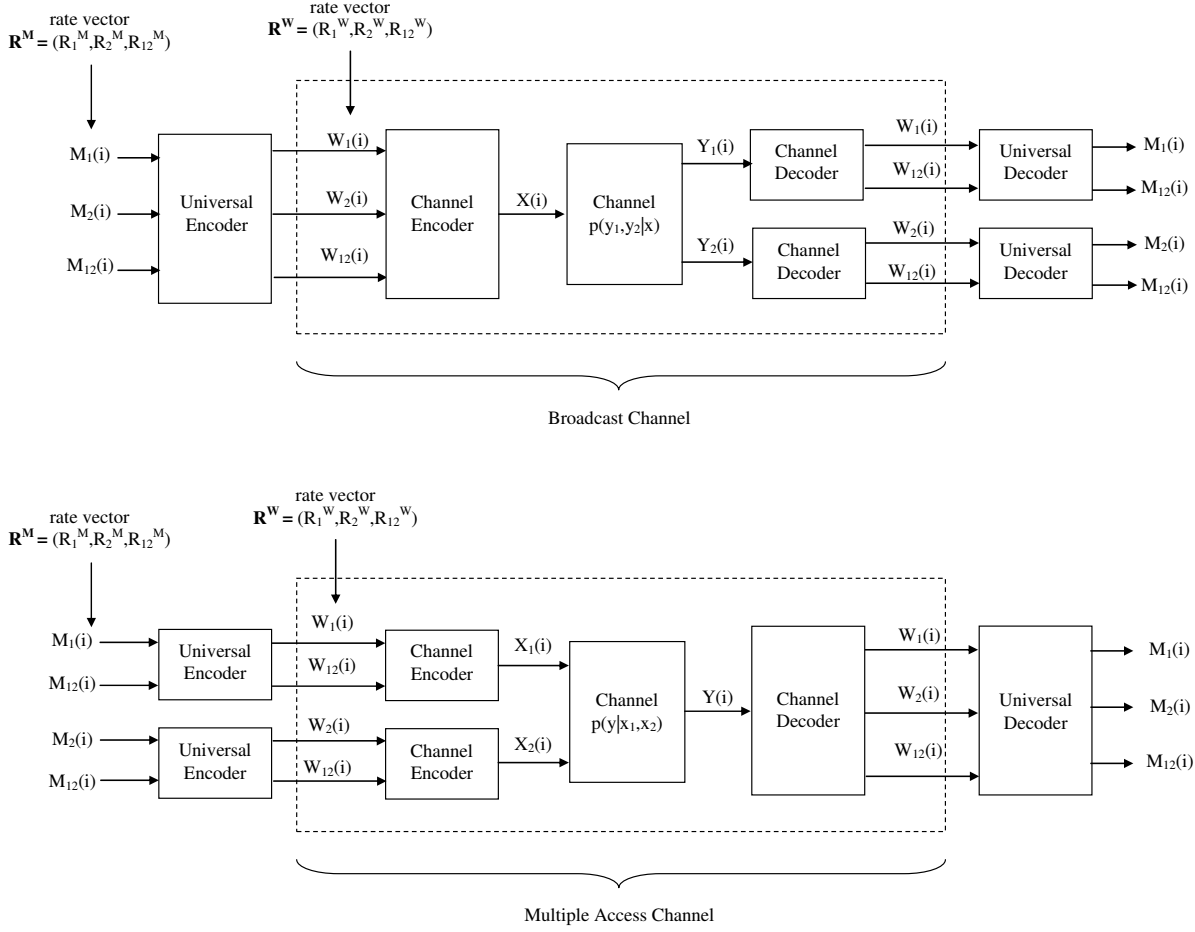}
\caption{System diagrams for $L=2$.}\label{fig:system}
\end{figure*}

\begin{definition}
A multicast rate vector $\bf R$ is said to be achievable for the broadcast channel if there exists a sequence of
$(2^{n{\bf R}},n)$ codes with $P_e^{(n)}\rightarrow 0$.
\end{definition}

\begin{definition}
The multicast capacity region of the broadcast channel is the closure of the set of achievable multicast rate vectors.
It is denoted ${\cal C}_{p(y_1,y_2,y_3|x)}$ or ${\cal C}$ for short.
\end{definition}

\noindent Often we will omit the adjective `multicast'.

We now give a defintion that makes precise the operation of swapping common and private messages, and quantifies the
change in the rate vector. Let ${\bf R}^W$ and ${\bf R}^M$ be two rate vectors.

\begin{definition}
A $(d{\bf R},n)$-universal encoding/decoding operation is a pair of mappings
\begin{align*}
\begin{array}{c}
  W_{\cal J} : \displaystyle \prod_{{\cal I} \subseteq \{1,2,3\}} \{1,\dots,2^{nR_{\cal I}^M}\} \rightarrow \{1,\dots,2^{nR_{\cal
J}^W}\}, \text{and} \\ \\
  {\hat M}_{i,{\cal I}} : \displaystyle \prod_{\scriptsize \begin{array}{c}
                               {\cal J} \subseteq \{1,2,3\} \\
                               \text{s.t. } i \in {\cal J}
                             \end{array}}
\{1,\dots,2^{nR_{\cal J}^W}\} \rightarrow \{1,\dots,2^{nR_{\cal I}^M}\}
\end{array}
\end{align*}

\noindent for all ${\cal J} \subseteq \{1,2,3\}$ and all $i \in \{1,2,3\}$ for all ${\cal I} \subseteq \{1,2,3\}$ such
that $i \in \cal I$, with the properties ${\hat M}_{i,{\cal I}} = {\hat M}_{j,{\cal I}}$ for all $i,j \in \{1,2,3\}$,
${\bf R}^M \neq {\bf R}^W$ and
\begin{equation*}
\frac{{\bf R}^M-{\bf R}^W}{\lVert {\bf R}^M-{\bf R}^W \rVert} = d{\bf R},
\end{equation*}
\noindent $W(M)$ being the universal encoder output and ${\hat M}(\hat W)$ being the universal decoder output. The
vector $d{\bf R}$ is referred to as the `normalized difference vector'.
\end{definition}

\noindent The property ${\hat M}_{i,{\cal I}} = {\hat M}_{j,{\cal I}}$ for all $i,j \in \{1,2,3\}$ ensures that all
users agree on the common messages they decode. See figure \ref{fig:system} for a system diagram that illustrates the
relationship between $M,W,{\hat M}$ and ${\hat W}$.

\begin{example}\label{exm:universal encoding}
Suppose ${\bf R}^W = (1,0,0,1,0,0,0)$ and ${\bf R}^M = (2,0,0,0,0,0,0)$. Let $n=1$. Then the mapping $W_1({\bf M}) =
M_1(1)$, $W_{12}({\bf M}) = M_1(2)$ is a universal encoding operation with $d{\bf R} = (1,0,0,-1,0,0,0)/\sqrt{2}$. The
universal decoding operation is the inverse mapping given by ${\hat M}_1({\hat {\bf W}}) = [{\hat W}_1, {\hat
W}_{12}]$.
\end{example}

\begin{definition}
A $d{\bf R}$-universal encoding/decoding operation is called `distinct' if the vector $d{\bf R}$ cannot be expressed as
positive linear combination of vectors $\{d{\bf R}_i\} \neq d{\bf R}$ for which there exist $d{\bf R}_i$-universal
encoding/decoding operations for $i=1,2,\dots$. The (rays associated with the) normalized difference vectors
corresponding to distinct $d{\bf R}$-universal encoding/decoding operations are called `extremal rays'.
\end{definition}

\noindent By positive linear combination we mean a weighted linear sum with non-negative coefficients.

\begin{example}
It will be evident later that the universal encoding/decoding operation given in example \ref{exm:universal encoding}
is distinct and thus $(1,0,0,-1,0,0,0)/\sqrt{2}$ is an extremal ray. By symmetry $(0,1,0,-1,0,0,0)/\sqrt{2}$ is also an
extremal ray. Note distinctness does not imply uniqueness --the universal encoding/decoding operation that moves from
rate vector ${\bf R}^W = (1,0,0,1,0,0,0)$ to rate vector ${\bf R}^M = (1.5,0,0,0.5,0,0,0)$ is also classified as
distinct, but it has the same normalized difference vector. An example of a universal encoding/decoding operation that
is not distinct is one that moves from rate vector ${\bf R}^W = (1,0,0,1,0,0,0)$ to rate vector ${\bf R}^M =
(1.5,0.5,0,0,0,0,0)$. Denote the corresponding normalized difference vector is $d{\bf R}_A \triangleq
(0.5,0.5,0,-1,0,0,0)/\sqrt{1.5}$. The universal encoding/decoding operations that achieve this shift correspond to
time-sharing between two operations, one with normalized difference vector $d{\bf R}_B \triangleq
(1,0,0,-1,0,0,0)/\sqrt{2}$, the other with normalized difference vector $d{\bf R}_C \triangleq
(0,1,0,-1,0,0,0)/\sqrt{2}$. Indeed we have
\begin{equation*}
d{\bf R}_A = \frac{1}{\sqrt{3}}d{\bf R}_B + \frac{1}{\sqrt{3}}d{\bf R}_C.
\end{equation*}
\end{example}

\noindent We now give a formal definition of the region alluded to in figure \ref{fig:BC_example}. Let
\begin{equation*}
{\bf R}^* = (R_1^*,R_2^*,R_3^*,R_{12}^*,R_{13}^*,R_{23}^*,R_{123}^*)
\end{equation*}

\noindent be a parameter.

\begin{definition}
The `${\bf R}^*$-multicast region' is the intersection of the capacity regions of all broadcast channels for which the
rate vector ${\bf R}^*$ is achievable, i.e.
\begin{equation*}
\bigcap_{p(y_1,y_2,y_3|x) : {\bf R}^* \in {\cal C}_{p(y_1,y_2,y_3|x)}} {\cal C}_{p(y_1,y_2,y_3|x)}
\end{equation*}
\end{definition}

\noindent See figures \ref{fig:BC_example} for examples of this region.

As the problem setup for the multiple access channel is entirely analogous to the aforementioned setup for the
broadcast channel, we do not explicitly describe it. An example of the ${\bf R}^*$-multicast region is given in figure
\ref{fig:MAC_example}

The aim of this paper is to characterize the ${\bf R}^*$-multicast cones for both the broadcast and multiple access
channels.

\section{Results}

\begin{figure*} \label{fig:matricesL2}
\begin{equation*}
{\bf G}_{BC,2} = \left[%
\begin{array}{rrr}
  1 & 0 & 1 \\
  0 & 1 & 1 \\
  1 & 1 & 1 \\
\end{array}
\right] \quad\quad
{\bf H}_{BC,2} = \left[%
\begin{array}{rrr}
  1 & 0 & -1 \\
  0 & 1 & -1 \\
  -1 & -1 & 1 \\
\end{array}%
\right]
\end{equation*}
\begin{equation*}
{\bf G}_{MAC,2} = \left[%
\begin{array}{rrr}
  1 & 0 & 1 \\
  0 & 1 & 1 \\
  1 & 1 & 1 \\
\end{array}
\right]
\quad\quad {\bf H}_{MAC,2} = \left[
\begin{array}{rrr}
  1 & 0 & -1 \\
  0 & 1 & -1 \\
  -1 & -1 & 1 \\
\end{array}%
\right]
\end{equation*}
\caption{Results for $L = 2$.}
\end{figure*} \label{fig:matricesL3}
\begin{figure*}
\begin{equation*}
{\bf G}_{BC,3} = \left[%
\begin{array}{rrrrrrrrrrrrrrr}
  1 & 0 & 0 & 1 & 1 & 0 & 1 & 1 & 1 & 1 & 1 & 2 & 2 & 1 & 2 \\
  0 & 1 & 0 & 1 & 0 & 1 & 1 & 1 & 1 & 1 & 1 & 2 & 1 & 2 & 2 \\
  0 & 0 & 1 & 0 & 1 & 1 & 1 & 1 & 1 & 1 & 1 & 1 & 2 & 2 & 2 \\
  1 & 1 & 0 & 1 & 1 & 1 & 1 & 2 & 1 & 1 & 2 & 2 & 2 & 2 & 2 \\
  1 & 0 & 1 & 1 & 1 & 1 & 1 & 1 & 2 & 1 & 2 & 2 & 2 & 2 & 2 \\
  0 & 1 & 1 & 1 & 1 & 1 & 1 & 1 & 1 & 2 & 2 & 2 & 2 & 2 & 2 \\
  1 & 1 & 1 & 1 & 1 & 1 & 1 & 2 & 2 & 2 & 2 & 3 & 3 & 3 & 3 \\
\end{array}
\right]
\end{equation*}
\begin{equation*}
{\bf H}_{BC,3} = \left[%
\begin{array}{rrrrrrrrrrrrrrrr}
  -1 & -1 & 0 & 1 & 0 & 0 & 1 & 0 & 0 & -1 & 0 & 0 & 0 & 0 & 0 & 0 \\
  -1 & 0 & -1 & 0 & 1 & 0 & 0 & -1 & 0 & 0 & 1 & 0 & 0 & 0 & 0 & 0 \\
  0 & -1 & -1 & 0 & 0 & -1 & 0 & 0 & 1 & 0 & 0 & 1 & 0 & 0 & 0 & 0 \\
  1 & 0 & 0 & -1 & -1 & -1 & 0 & 0 & 0 & 0 & 0 & 0 & 1 & 0 & 0 & -1 \\
  0 & 1 & 0 & 0 & 0 & 0 & -1 & -1 & -1 & 0 & 0 & 0 & 0 & 1 & 0 & -1 \\
  0 & 0 & 1 & 0 & 0 & 0 & 0 & 0 & 0 & -1 & -1 & -1 & 0 & 0 & 1 & -1 \\
  0 & 0 & 0 & 0 & 0 & 1 & 0 & 1 & 0 & 1 & 0 & 0 & -1 & -1 & -1 & 2 \\
\end{array}
\right]
\end{equation*}
\begin{equation*}
{\bf G}_{MAC,3} = \left[%
\begin{array}{rrrrrrrrrrrrrrr}
  1 & 0 & 0 & 1 & 1 & 0 & 1 & 1 & 1 & 1 & 1 \\
  0 & 1 & 0 & 1 & 0 & 1 & 1 & 1 & 1 & 1 & 1 \\
  0 & 0 & 1 & 0 & 1 & 1 & 1 & 1 & 1 & 1 & 1 \\
  0 & 0 & 0 & 1 & 0 & 0 & 0 & 1 & 1 & 1 & 1 \\
  0 & 0 & 0 & 0 & 1 & 0 & 1 & 0 & 1 & 1 & 1 \\
  0 & 0 & 0 & 0 & 0 & 1 & 1 & 1 & 0 & 1 & 1 \\
  0 & 0 & 0 & 0 & 0 & 0 & 0 & 0 & 0 & 0 & 1 \\
\end{array}
\right]
\end{equation*}
\begin{equation*}
{\bf H}_{MAC,3} = \left[%
\begin{array}{rrrrrrrrrrrrrrrr}
  1 & 1 & 0 & 0 & 0 & 0 & 0 & 0 & 0 & 0 \\
  0 & 0 & 1 & 1 & 0 & 0 & 0 & 0 & 0 & 0 \\
  0 & 0 & 0 & 0 & 1 & 1 & 0 & 0 & 0 & 0 \\
  -1 & 0 & -1 & 0 & 0 & 0 & 1 & 0 & 0 & 0 \\
  0 & -1 & 0 & 0 & -1 & 0 & 0 & 1 & 0 & 0 \\
  0 & 0 & 0 & -1 & 0 & -1 & 0 & 0 & 1 & 0 \\
  0 & 0 & 0 & 0 & 0 & 0 & -1 & -1 & -1 & 1 \\
\end{array}
\right]
\end{equation*}
\caption{Results for $L = 3$.}
\end{figure*}

\begin{theorem}\label{thm:main_result_BC}
For $L=3$ the ${\bf R}^*$-multicast region of the broadcast channel is the set of all ${\bf R} \in {\mathbb R}_+^7$
satisfying
\begin{equation}\label{eqn:BC_region}
{\bf G}_{BC,3}^T\left({\bf R}-{\bf R}^*\right) \preceq 0
\end{equation}
where ${\bf G}_{BC,3}$ is given in figure \ref{fig:matricesL3}. This region is a polytope, characterized by the cone
$\{{\bf R}\in{\mathbb R}^7 : {\bf G}_{BC,3}^T{\bf R} \preceq 0\}$. We refer to this cone as the $L=3$ 'multicast cone'.
The sixteen extremal rays of this cone are given by the columns of the matrix ${\bf H}_{BC,3}$ in figure
\ref{fig:matricesL3}. Thus there are 16 distinct universal encoding/decoding operations for $L=3$.
\end{theorem}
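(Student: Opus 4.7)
The theorem decomposes into (a) achievability: each of the 16 columns of ${\bf H}_{BC,3}$ is the normalized difference vector of some universal encoding/decoding operation; (b) a polyhedral identity: the cone generated by these columns equals $\{{\bf R}\in{\mathbb R}^7 : {\bf G}_{BC,3}^T{\bf R}\preceq 0\}$; and (c) a converse: no direction outside this cone is universally achievable from ${\bf R}^*$. Together with translation by ${\bf R}^*$ and intersection with ${\mathbb R}_+^7$, these yield the claimed polytope. I would handle the pieces in this order.

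For (a), for each column I exhibit an explicit $n=1$ universal operation. The nine ``split'' columns (4, 5, 7, 9, 11, 12, 13, 14, 15, each of the form $+{\bf e}_{\cal I} - {\bf e}_{\cal J}$ with ${\cal I}\subsetneq{\cal J}$) are realized by the relabeling $W_{\cal J} = M_{\cal I}$: receivers in ${\cal I}$ decode $W_{\cal J}$ to obtain $M_{\cal I}$, while receivers in ${\cal J}\setminus{\cal I}$ decode it too but discard the value. The six ``combine'' columns (1, 2, 3, 6, 8, 10, each of the form $-{\bf e}_{\cal I}-{\bf e}_{\cal J}+{\bf e}_{{\cal I}\cup{\cal J}}$) are realized by the duplication $W_{\cal I}=W_{\cal J}=M_{{\cal I}\cup{\cal J}}$, so that every receiver in ${\cal I}\cup{\cal J}$ decodes the same bit. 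Column 16, $(0,0,0,-1,-1,-1,2)^T$, is the genuinely new operation: given two common bits $b, c$, I encode $W_{12} = b$, $W_{13} = c$, $W_{23} = b \oplus c$, and each receiver sees two of the three pairwise symbols from which it recovers $(b, c)$ by inverting a $2\times 2$ system over $GF(2)$. The consistency requirement ${\hat M}_{i,{\cal I}} = {\hat M}_{j,{\cal I}}$ is immediate in each construction.

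Claim (b) is a routine Minkowski--Weyl computation: verify (i) ${\bf G}_{BC,3}^T{\bf h}_j \preceq 0$ for each column ${\bf h}_j$ of ${\bf H}_{BC,3}$, and (ii) each row of ${\bf G}_{BC,3}^T$ is active on at least six columns of ${\bf H}_{BC,3}$ whose span is six-dimensional, making it a facet of the cone. Both are mechanical arithmetic checks on integer matrices.

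The main obstacle is (c). By Farkas' lemma it suffices, for each row ${\bf g}_\ell^T$ of ${\bf G}_{BC,3}^T$, to exhibit a broadcast channel $p_\ell$ on which ${\bf R}^*$ is achievable and the linear functional ${\bf g}_\ell^T{\bf R}$ is maximized over ${\cal C}_{p_\ell}$ exactly at ${\bf R}^*$; any direction $d{\bf R}$ with ${\bf g}_\ell^T d{\bf R}>0$ is then excluded by this channel. For the 14 rows corresponding to the split and combine rays, the extremal channels can be built as tensor products of simple deterministic two-user subchannels whose cut-set capacities give the required inequality --- a direct lifting of the $L=2$ argument applied to the appropriate two-user sub-problem. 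The row dual to column 16 is the delicate case: it mixes $R_{12}, R_{13}, R_{23}$ and $R_{123}$ in a way that is irreducible to pairwise bounds, and I expect the extremal channel to be a three-receiver noiseless broadcast whose topology mirrors the butterfly bottleneck of network coding. Proving that its cut-set outer bound is tight exactly along the column-16 direction --- so that column 16 is a genuine extremal ray, not a positive combination of the other fifteen --- is the heart of the converse and the main technical challenge.
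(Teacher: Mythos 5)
Your plan matches the paper's skeleton, and your achievability constructions in part (a) are essentially the paper's: the split/combine classification of the fifteen relabeling rays is exactly right, and your column-16 operation $W_{12}=b$, $W_{13}=c$, $W_{23}=b\oplus c$ is precisely the paper's network-coding move. (The paper works with length-$n$ blocks and fills out the polytope by time-sharing among the sixteen extremal shifts; your $n=1$ description gives only the directions, so you would still need that step, but it is routine.) Part (b) is the same computer-assisted polyhedral check the paper invokes.

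The gap is in the converse, part (c). The paper does not build one channel per facet; for each ${\bf R}^*$ it constructs a single deterministic bit-pipe broadcast channel (input $X=(X_{\cal I})_{{\cal I}\subseteq\{1,2,3\}}$ with $X_{\cal I}\in\{0,\dots,2^{nR^*_{\cal I}}-1\}$, receiver $i$ observing exactly those $X_{\cal I}$ with $i\in{\cal I}$) and shows that the capacity region of this one channel is exactly the polytope of (\ref{eqn:BC_region}). The fifteen facet inequalities then become entropy inequalities for this channel: fourteen of them are delivered by a generalized cut-set lemma on triples of link-cuts (the paper's Lemma 5.1), while the fifteenth, column 11 of ${\bf G}_{BC,3}$ with normal vector $(1,1,1,2,2,2,2)$, violates that lemma's containment hypothesis and is handled by a separate argument built on Lemma~\ref{lem:zeroint condsets}. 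Two points in your proposal need correcting. First, there is no ``row dual to column 16'' of ${\bf H}_{BC,3}$: the cone has 15 facets and 16 extremal rays and is not simplicial, so the network-coding ray lies on several facets simultaneously and is not paired with any single one of them. Second, the genuinely irreducible inequality is the $(1,1,1,2,2,2,2)$ facet, not a butterfly-type bound tied to ray 16, and your plan of assembling the extremal channels as tensor products of two-user subchannels is exactly what that facet defeats, since it is a three-cut bound that does not decompose into pairwise cut-set arguments. To complete the converse you would need either the paper's single bit-pipe channel together with its tripartite entropy machinery, or a genuinely different construction for that facet that your proposal does not supply.
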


The $(1,1,1)$-multicast region for the broadcast channel for $L=2$ is illustrated in figure \ref{fig:BC_example}. For
$L=2$ there are 3 distinct universal encoding/decoding operations. The ${\bf G}_{BC,2}$ and ${\bf H}_{BC,2}$ matrices
are given in figure \ref{fig:matricesL2}.

\noindent The columns of ${\bf G}_{BC,2}$ are the normal vectors to the three hyperplanes bounding the region. The
columns of ${\bf H}_{BC,2}$ are the three extremal rays (see figure \ref{fig:BC_example}).

\begin{theorem}\label{thm:main_result_MAC}
For $L=3$ the ${\bf R}^*$-multicast region of the multiple access channel is the set of all ${\bf R} \in {\mathbb
R}_+^7$ satisfying
\begin{equation}\label{eqn:MAC_region}
{\bf G}_{MAC,3}^T\left({\bf R}-{\bf R}^*\right) \preceq 0
\end{equation}
where ${\bf G}_{MAC,3}$ is given in figure \ref{fig:matricesL3}. This region is also a polytope characterized by the
cone $\{{\bf R}\in{\mathbb R}^7 : {\bf G}_{MAC,3}^T{\bf R} \preceq 0\}$. The 10 extremal rays of this cone are given by
the columns of the matrix ${\bf H}_{MAC,3}$ in figure \ref{fig:matricesL3}. Thus there are 10 distinct universal
encoding/decoding operations for $L=3$.
\end{theorem}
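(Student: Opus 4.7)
The plan is to prove the theorem by establishing two matching polyhedral descriptions of the same region: an inner bound showing that every rate vector in the cone $\{{\bf R}:{\bf G}_{MAC,3}^{T}({\bf R}-{\bf R}^{*})\preceq 0\}$ is achievable on every MAC supporting ${\bf R}^{*}$, and a matching outer bound showing that no rate vector outside this cone is universally achievable. Together with the dual description via extremal rays, these imply both parts of the theorem.

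For the inner bound I would exhibit the 10 universal encoding/decoding operations associated with the columns of ${\bf H}_{MAC,3}$ explicitly, then invoke time-sharing to cover the full conic hull. The six ``private-to-pairwise'' operations each trade a bit of transmitter $i$'s private rate $R_i$ against a bit of the pairwise common rate $R_{ij}$ for a pair $\{i,j\}\ni i$: the key observation is that transmitter $i$, being in the pair, has access to every bit of $W_{ij}$, and so can consume a common bit in its private encoding slot of the underlying ${\bf R}^{W}$-code, while transmitter $j$ simply refrains from using that particular bit in its own encoding. The three ``pairwise-to-triple'' operations work analogously, promoting a pairwise common bit into $W_{123}$ with the uninvolved transmitter refraining. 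The final operation lowers $R_{123}$ by one, discarding a bit by fixing it to a constant value. In each case the required map $W_{\cal J}({\bf M})$ depends only on $M$-messages known to every transmitter in $\cal J$, so the operation is realizable in the distributed MAC setting.

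For the outer bound I would construct, for each of the 11 facet normals that form the columns of ${\bf G}_{MAC,3}$, a specific MAC on which ${\bf R}^{*}$ is achievable and whose capacity region has exactly the corresponding hyperplane as a binding facet at ${\bf R}^{*}$. The primary tool is the family of noiseless distributed-pipe MACs in which transmitter $i$ has a noiseless scalar pipe of integer capacity $c_{i}$ and the receiver observes the concatenation; the resulting capacity region is given by the Hall-type cutset bounds $\sum_{{\cal I}\subseteq S}R_{\cal I}\le\sum_{i\in S}c_{i}$ for every $S\subseteq\{1,2,3\}$. Tuning $(c_{1},c_{2},c_{3})$ so that a single cutset is binding at ${\bf R}^{*}$ gives the seven pure-cutset facets (columns 1--6 and column 11 of ${\bf G}_{MAC,3}$). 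For the four remaining facets (columns 7--10), which couple a pairwise rate with the triple rate and are not pure cutsets, more refined constructions are needed in which certain message subsets are multiplexed through a shared bottleneck (for instance, $W_{ij}$ and $W_{123}$ sharing a single three-transmitter pipe); each such MAC supports ${\bf R}^{*}$ and has exactly the desired hyperplane as a tight facet at ${\bf R}^{*}$. Intersecting the capacity regions over this family of MACs yields the outer bound.

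Finally I would verify that the polyhedral cone cut out by the 11 hyperplanes of ${\bf G}_{MAC,3}$ coincides with the conic hull of the 10 rays of ${\bf H}_{MAC,3}$. This amounts to a finite linear-algebra computation: check that ${\bf G}_{MAC,3}^{T}{\bf H}_{MAC,3}$ has no positive entries (so each ray lies in the cone), that dimensions agree, and that no column of ${\bf H}_{MAC,3}$ is a positive combination of the others (so all are extremal). The principal obstacle is the outer-bound step for the four non-cutset facets: it is not immediate that MACs tightly realizing these mixed pairwise--triple constraints at ${\bf R}^{*}$ exist, and producing such MACs explicitly--while ensuring that the targeted facet is the only new constraint introduced--is the main additional work beyond the straightforward cutset cases.
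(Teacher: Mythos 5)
Your direct part matches the paper's sketch: the ten columns of $\mathbf{H}_{MAC,3}$ correspond to private-to-pairwise, pairwise-to-triple, and rate-lowering operations realizable distributively, and time-sharing covers the cone. The final polyhedral check (that the $\mathbf{G}$-description and the $\mathbf{H}$-description cut out the same cone) is also the same.

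The gap is in the converse, and it is substantive. The paper does not build one MAC per facet. Instead it constructs a single \emph{sequence} of deterministic ``coordination-channel'' MACs indexed by an integer $k$: for each $\mathcal{I}\subseteq\{1,2,3\}$ there is a parallel sub-channel that outputs the common symbol only when all transmitters in $\mathcal{I}$ agree, and an erasure otherwise. As $k\to\infty$ these sub-channels pass only genuinely shared information, so the limiting intersection $\bigcap_k \mathcal{C}^k$ is the Minkowski sum of seven one-dimensional regions $\mathcal{C}_{\mathcal{I}}$, and the converse reduces to a polyhedral identity between that Minkowski sum and the $\mathbf{G}_{MAC,3}$-polytope. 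Your facet-by-facet approach with noiseless distributed pipes handles the seven genuine cutsets (columns $1$--$6$ and $11$), but it fails for columns $7$--$10$. First, you mischaracterize them: those columns have a zero in the $R_{123}$ row, so they do \emph{not} ``couple a pairwise rate with the triple rate.'' They are of the form $R_1+R_2+R_3$ plus a strict subset of the pairwise rates, with $R_{12}$, $R_{123}$ (etc.) absent. Second, for a simple pipe-MAC to make column $7$ tight at $\mathbf{R}^*$ you would need the sum over $\{R_1,R_2,R_3,R_{13},R_{23}\}$ to equal $c_1+c_2+c_3$; but then there is no residual pipe capacity to carry $R_{12}^*$ or $R_{123}^*$, so that MAC cannot support $\mathbf{R}^*$ unless $R_{12}^*=R_{123}^*=0$. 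This is precisely what the coordination channels are designed to circumvent: they allow $W_{12}$ and $W_{123}$ to pass without charging the cut. Your proposed fix (``multiplexing $W_{ij}$ and $W_{123}$ through a shared bottleneck'') is aimed at the wrong constraint and is not shown to produce the correct facet while still supporting $\mathbf{R}^*$. Without a concrete construction for columns $7$--$10$, the outer bound is not established.
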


The $(1,1,1)$-multicast region for the MAC for $L=2$ is illustrated in figure \ref{fig:MAC_example}. There are 3
distinct universal encoding/decoding operations. The ${\bf G}_{MAC,2}$ and ${\bf H}_{MAC,2}$ matrices are given in
figure \ref{fig:matricesL2}.

An alternative interpretation of theorem \ref{thm:main_result_BC} is the following (the same interpretation applies for
\ref{thm:main_result_MAC}). For notational simplicity we denote the capacity region of an arbitrary broadcast channel
by $\cal C$. Let
\begin{equation*}
{\bf R}^*(\alpha) = \text{arg}\max_{R \in {\cal C}} \alpha^T {\bf R}
\end{equation*}

\noindent ${\bf R}^*(\alpha)$ is the rate vector lying on the boundary of the capacity region in the direction of
$\alpha$. Let
\begin{equation*}
{\cal C}^*(\alpha) = \left\{ {\bf R} \in {\mathbb R}_+^7 \left| \alpha^T{\bf R} \le \alpha^T{\bf R}^*(\alpha) \right.
\right\}.
\end{equation*}

\noindent ${\cal C}^*(\alpha)$ is the halfspace of all rate vectors lying underneath the hyperplane $\alpha^T{\bf R} =
\alpha^T{\bf R}^*(\alpha)$. The region $\cal C$ is convex and thus we can characterize it by its support function
${\cal C}^*(\alpha)$, i.e.
\begin{equation*}
{\cal C} = \bigcap_{\alpha \in {\mathbb R}_+^7} {\cal C}(\alpha).
\end{equation*}

\noindent However this is not the minimal dual representation of ${\cal C}$. Let
\begin{equation*}
{\cal H}^*_3 \triangleq \left\{ \alpha \in {\mathbb R}_+^7 \left| \alpha^T {\bf H}_{BC,3} \preceq 0 \right. \right\}
\end{equation*}

\begin{corollary}
The multicast capacity region of an any broadcast channel with three receivers can be expressed as
\begin{equation*}
{\cal C} = \bigcap_{\alpha \in {\cal H}} {\cal C}(\alpha).
\end{equation*}

\noindent if and only if
\begin{equation*}
{\cal H} \supseteq {\cal H}^*_3.
\end{equation*}
\end{corollary}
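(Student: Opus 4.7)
The plan is to combine the $K$-invariance of broadcast capacity regions given by Theorem~\ref{thm:main_result_BC} with polar-cone duality between the multicast cone $K$ (the closed convex cone generated by the sixteen columns of $\mathbf{H}_{BC,3}$) and its polar $K^\circ = \{\alpha : \alpha^T \mathbf{v} \leq 0 \text{ for all } \mathbf{v} \in K\}$. Since every column of $\mathbf{G}_{BC,3}$ lies in $\mathbb{R}_+^7$, we have $K^\circ \subseteq \mathbb{R}_+^7$, and in fact $\mathcal{H}^*_3 = K^\circ$. Throughout, I will use that $\mathcal{C}$ is compact, convex, and componentwise downward-closed in $\mathbb{R}_+^7$, and that Theorem~\ref{thm:main_result_BC} gives $(\mathbf{R}+K) \cap \mathbb{R}_+^7 \subseteq \mathcal{C}$ whenever $\mathbf{R} \in \mathcal{C}$.

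For the ``if'' direction, assume $\mathcal{H} \supseteq \mathcal{H}^*_3$. The inclusion $\mathcal{C} \subseteq \bigcap_{\alpha \in \mathcal{H}} \mathcal{C}(\alpha)$ is immediate from the definition of $\mathcal{C}(\alpha)$, so it suffices to show $\bigcap_{\alpha \in \mathcal{H}^*_3} \mathcal{C}(\alpha) \subseteq \mathcal{C}$. Fix $\mathbf{R}^{**} \in \mathbb{R}_+^7 \setminus \mathcal{C}$ and form the convex set $A = \mathbf{R}^{**} + \mathbb{R}_+^7 - K$. $K$-invariance together with componentwise downward-closure forces $A \cap \mathcal{C} = \emptyset$: any $\mathbf{R}^{**} + \mathbf{w} - \mathbf{v} \in \mathcal{C}$ with $\mathbf{w} \geq 0$ and $\mathbf{v} \in K$ would yield $\mathbf{R}^{**} + \mathbf{w} \in \mathcal{C}$ by $K$-invariance applied to $\mathbf{v}$ (noting $\mathbf{R}^{**}+\mathbf{w} \geq 0$), and then $\mathbf{R}^{**} \in \mathcal{C}$ by downward closure, contradicting our choice of $\mathbf{R}^{**}$. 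Since $\mathcal{C}$ is compact and $A$ closed convex, a separating hyperplane produces $\alpha$ with $\sup_{\mathcal{C}} \alpha^T \mathbf{R} < \inf_A \alpha^T \mathbf{R}$; finiteness of the infimum along the recession directions $\mathbb{R}_+^7$ and $-K$ of $A$ forces $\alpha \geq 0$ and $\alpha^T \mathbf{v} \leq 0$ for every $\mathbf{v} \in K$, respectively. Hence $\alpha \in \mathbb{R}_+^7 \cap K^\circ = \mathcal{H}^*_3 \subseteq \mathcal{H}$, and this $\alpha$ certifies $\mathbf{R}^{**} \notin \mathcal{C}(\alpha)$.

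For the ``only if'' direction, I would argue the contrapositive: given $\alpha^* \in \mathcal{H}^*_3 \setminus \mathcal{H}$, exhibit a broadcast channel for which $\mathcal{C} \subsetneq \bigcap_{\alpha \in \mathcal{H}} \mathcal{C}(\alpha)$. The plan is to construct a channel whose capacity region has an essential supporting facet with outward normal $\alpha^*$ and whose support value $h_{\mathcal{C}}(\alpha^*)$ is strictly smaller than anything derivable from the remaining $\mathcal{H}$-constraints. For $\alpha^*$ equal to one of the fifteen extremal generators $\mathbf{g}_j$ of $\mathcal{H}^*_3$ (the columns of $\mathbf{G}_{BC,3}$), natural candidates are channels dominated by the corresponding cut-set bound; for instance, channels with a tight capacity bottleneck at a single receiver handle the three rays $(1,0,0,1,1,0,1)^T$, $(0,1,0,1,0,1,1)^T$, $(0,0,1,0,1,1,1)^T$, and the remaining generators correspond to more elaborate cut-sets already appearing in the proof of Theorem~\ref{thm:main_result_BC}. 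A general $\alpha^* = \sum_j c_j \mathbf{g}_j$ is then handled by a parallel composition of such channels, arranged so that $\alpha^*$ emerges as a facet normal of the product capacity region.

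The hard part is this construction step in the ``only if'' direction; the ``if'' half is essentially a one-shot consequence of Theorem~\ref{thm:main_result_BC} via polar-cone duality and the separating hyperplane theorem. What really needs work is realising, for each $\alpha^* \in \mathcal{H}^*_3$, an explicit broadcast channel whose capacity region has $\alpha^*$ as a non-redundant supporting facet. Fortunately the proof of Theorem~\ref{thm:main_result_BC} must already produce channels witnessing the tightness of each extremal ray of $K$, and these (and their parallel compositions) supply the needed building blocks.
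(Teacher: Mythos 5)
The paper states this corollary without supplying an explicit proof; it is left as an immediate consequence of Theorem~\ref{thm:main_result_BC}, so there is no paper argument to compare against step by step. Evaluating your write-up on its own merits:

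Your ``if'' direction is correct and is the natural route. Writing $K$ for the cone generated by the columns of ${\bf H}_{BC,3}$, you correctly observe that $\mathcal{H}^*_3 = \mathbb{R}_+^7 \cap K^\circ$ (and since ${\bf G}_{BC,3}\ge 0$, in fact $K^\circ = \text{cone}({\bf G}_{BC,3}) \subseteq \mathbb{R}_+^7$). The disjointness of $A = {\bf R}^{**} + \mathbb{R}_+^7 - K$ from $\mathcal{C}$ follows exactly as you say, $A$ is closed because $\mathbb{R}_+^7 - K$ is a polyhedral cone, and reading the memberships $\alpha \geq 0$ and $\alpha^T {\bf v} \le 0$ off the recession directions is the right move. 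The only caveats worth stating explicitly are that $\mathcal{C}$ must be assumed compact for strict separation, and that the $K$-invariance you invoke is the form $\{{\bf R}\in\mathbb{R}_+^7 : {\bf R}\preceq {\bf R}'+{\bf H}_{BC,3}\Delta,\ \Delta\ge 0\}\subseteq\mathcal{C}$ whenever ${\bf R}'\in\mathcal{C}$, which the direct part of Theorem~\ref{thm:main_result_BC} asserts.

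The gap is in the ``only if'' direction, and you flagged it yourself. The converse channels constructed in the proof of Theorem~\ref{thm:main_result_BC} make each column of ${\bf G}_{BC,3}$ --- i.e.\ each extremal ray of $\mathcal{H}^*_3$, not, as you say, of $K$ --- a tight, essential facet normal of $\mathcal{C}$. That handles $\alpha^*$ on the extremal rays of $\mathcal{H}^*_3$. But the stated conclusion $\mathcal{H}\supseteq\mathcal{H}^*_3$ concerns \emph{every} $\alpha^*\in\mathcal{H}^*_3$, including points in the relative interior of its faces, and for those your plan rests on the unproved claim that a parallel composition makes $\alpha^*=\sum_j c_j {\bf g}_j$ an essential facet normal of the resulting Minkowski sum. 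That is not automatic: the normal fan of a Minkowski sum is the common refinement of the summands' normal fans, and whether a prescribed conic combination appears as a ray of that refinement (and whether the corresponding support value is strictly below what the neighboring $\alpha\in\mathcal{H}$ would certify, so that dropping $\alpha^*$ really enlarges $\bigcap_{\alpha\in\mathcal{H}}\mathcal{C}(\alpha)$) is exactly the computation that needs to be carried out. As it stands, your argument establishes only that $\mathcal{H}$ must contain the extremal rays of $\mathcal{H}^*_3$, which is strictly weaker than $\mathcal{H}\supseteq\mathcal{H}^*_3$.
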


\noindent This says the following: when computing the multicast capacity region of a broadcast channel by maximizing
the weighted sum-rate, the smallest set that one need vary the weighting coefficients $\alpha$ over is ${\cal H}^*_3$.
Put another way, the normal vector $\alpha$ to any point on the boundary of the multicast capacity region is always
contained in the set ${\cal H}^*_3$. See figure \ref{fig:valid_region}.

\begin{figure}
\centering
\includegraphics[width=450pt]{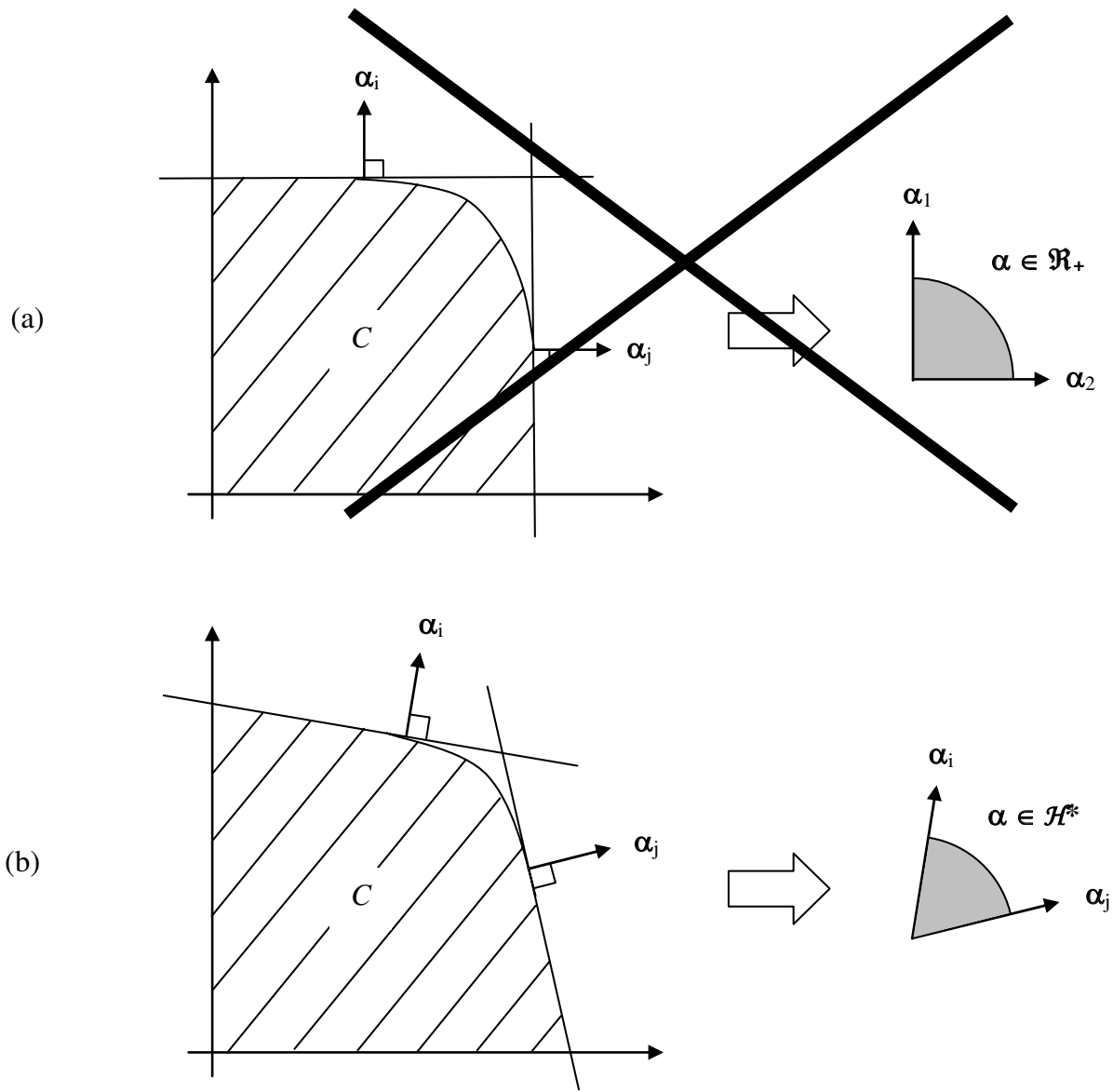}
\caption{The normal vector $\alpha$ of the broadcast channel capacity region satisfies $\alpha^T \in {\cal H}^*$. (a) A
capacity region that cannot occur. (b) A capacity region that can occur.} \label{fig:valid_region}
\end{figure}

\section{Proof of Theorem \ref{thm:main_result_BC}}
The direct part of the proof consists of showing that for any broadcast channel, if a rate vector ${\bf R}^*$ is
achievable then all rate vectors in the region given by equation (1) are achievable. This establishes that the ${\bf
R}^*$-multicast region is `at least as large' as the region given by equation (1). The converse part of the proof
consists of illustrating, for each ${\bf R}^* \in {\mathbb R}_+^7$, a broadcast channel for which no rate vector
outside the region given by equation (1) is achievable. This establishes that the ${\bf R}^*$-multicast region is `at
least as small' as the region given by equation (1). We start with the direct part. For notational simplicity we drop
the broadcast channel ({\it BC}) subscript.

\subsection{Direct Part}
Suppose that ${\bf R}^*$ is achievable for a particular broadcast channel. We show that any rate-vector ${\bf R} \in
{\mathbb R}_+^7$ satisfying
\begin{equation}
{\bf R} \preceq {\bf R}^* + {\bf H}_{BC,3}\Delta
\end{equation}

\noindent for $\Delta \in {\mathbb R}_+^{16}$ is also achievable. We then show that this region is precisely the one
given in equation (1). Let $\Delta_i$ denote the $i$th element of $\Delta$ and ${\bf H}_{BC,3}(i)$ denote the $i$th
column of ${\bf H}_{BC,3}$. To show that any rate-vector satisfying equation (2) is achievable, we show that each of
the 16 rate-vectors given by
\begin{equation}
{\bf R}^{(i)} = {\bf R}^* + {\bf H}_{BC,3}(i)\Delta_i^*, \quad\quad i=1,\dots,16
\end{equation}

\noindent are achievable where
\begin{equation*}
\Delta_i^* = \max_{{\bf H}_{BC,3}(i)\Delta_ i\preceq {\bf R}^*} \Delta
\end{equation*}
By time sharing between these vectors the entire boundary region $\{ {\bf R}^* - {\bf H}_{BC,3}\Delta | \Delta \in
{\mathbb R}_+^{16}\}$ is achieved and hence any point within it (i.e. satisfying equation (2)) can also be achieved.

Let ${\bf M}, {\hat {\bf M}}$ correspond to the binary message vector and estimate of the message vector, respectively,
that the transmitter wishes to send at rate vector ${\bf R}^{(i)}$. We illustrate the achievability of equation (3) for
$i=3$.

To universally encode for $i=3$, assume without loss of generality that $R_1^*\le R_2^*$. In what follows we ignore
rounding effects as it will be clear that in the limit $n \rightarrow \infty$ they are negligible. Set
\begin{align*}
W_1^n &= [M_{12}(1),\dots,M_{12}(nR_1^*)] \\
W_2^n &= [M_{12}(1),\dots,M_{12}(nR_1^*), M_{2}(1),\dots,M_{2}(nR_2^*-nR_1^*)] \\
W_{12}^n &= [M_{12}(nR_1^*+1),\dots,M_{12}(nR_{1}^*+nR_{12}^*)]
\end{align*}

\noindent In words, the information common to receivers 1 and 2 is split into two parts. The first part is replicated
and sent separately down both receiver 1 and receiver 2's private channels. The second part is sent down the channel
common to both receivers. As receiver 2's private channel can accommodate a higher bit-rate than receiver 1's, there is
some bandwidth left over. This is allocated to sending some of receiver 2's private information.

For all other subsets ${\cal I}$ of $\{1,2,3\}$ set $W_{\cal I}^n=M_{\cal I}^n$ and ${\hat M}_{\cal I}^n = {\hat
W}_{\cal I}^n$. Universal decoding is straightforward. The first receiver sets
\begin{align*}
{\hat M}_{1,12}^n &= [{\hat W}_1^n, {\hat W}_{12}^n] \\
{\hat M}_{1,13}^n &= {\hat W}_{13}^n \\
{\hat M}_{1,123}^n &= {\hat W}_{123}^n
\end{align*}

\noindent and in this way successfully recovers its message, as the achievability of ${\bf R}^*$ implies that $\bf W$
was decoded correctly. The second receiver sets
\begin{align*}
{\hat M}_{2,2}^n &= [{\hat W}_2(nR_1^*+1), \dots, {\hat W}_2(nR_2^*)] \\
{\hat M}_{2,12}^n &= [{\hat W}_2(1),\dots,{\hat W}_2(nR_1^*), {\hat W}_{12}^n] \\
{\hat M}_{2,23}^n &= {\hat W}_{23}^n \\
{\hat M}_{2,123}^n &= {\hat W}_{123}^n
\end{align*}

\noindent and is similarly successful in decoding. The third receivers sets ${\hat M}_{\cal I}^n = {\hat W}_{\cal I}^n$
for all of its messages. Then we have achieved a rate vector of
\begin{align*}
{\bf R}^{(3)} &= {\bf R}^* + \left[ \begin{array}{c}
                              -1 \\
                              -1 \\
                              0 \\
                              1 \\
                              0 \\
                              0 \\
                              0
                            \end{array} \right] R_1^* \\
&= {\bf R}^* + {\bf H}_{BC,3}(3)\Delta_3.
\end{align*}
with $\Delta_3 = R_1^*$. The universal encoding and decoding procedures for all other $i\in\{1,\dots,15\}$ are similar
and follow from the structure of the columns of the matrix ${\bf H}_{BC,3}$.

Universal encoding and decoding for $i=16$ is different. Assume without loss of generality that $R_{12}^* \le R_{13}^*
\le R_{23}^*$. To encode, set $W_i^n=M_i^n$ for $i=1,2,3$ and
\begin{align*}
W_{12}^n &= [M_{123}(1),\dots,M_{123}(nR_{12}^*)] \\
W_{13}^n &= [M_{123}(nR_{12}^*+1),\dots,M_{123}(2nR_{12}^*),M_{13}(1),M_{13}(nR_{13}-nR_{12})] \\
W_{23}^n &= [M_{123}(1)\oplus M_{123}(nR_{12}^*+1),\dots, M_{123}(nR_{12}^*) \oplus M_{123}(2nR_{12}^*), \\
&\quad\quad\quad M_{23}(1),\dots,M_{23}(nR_{23}-nR_{12})] \\
W_{123}^n &= [M_{123}^n(2nR_{12}^*),\dots,M_{123}^n(2nR_{12}^*+nR_{123}^*)]
\end{align*}

\noindent In words, the information common to all receivers is split into three streams. The first and second are sent
at rate $R_{12}^*$ using the three pairwise links. The third stream is sent at rate $R_{123}^*$ across the link common
to all receivers.

The first receiver decodes by setting ${\hat M}_{1,1}^n={\hat W}_1^n$ and
\begin{align*}
M_{13}^n &= [W_{13}(nR_{12}^*+1),\dots,W_{13}(nR_{13}^*)] \\
M_{123}^n &= [W_{12}^n, W_{13}^n, W_{123}^n].
\end{align*}

\noindent The second receiver decodes by setting ${\hat M}_{1,2}^n={\hat W}_2^n$ and
\begin{align*}
M_{23}^n &= [W_{23}(nR_{12}^*+1),\dots,W_{23}(nR_{23}^*)] \\
M_{123}^n &= [W_{12}^n, W_{12}^n\oplus W_{23}^n, W_{123}^n].
\end{align*}

\noindent The third receiver decodes by setting ${\hat M}_{1,3}^n={\hat W}_3^n$ and
\begin{align*}
M_{13}^n &= [W_{13}(nR_{12}^*+1),\dots,W_{13}(nR_{13}^*)] \\
M_{23}^n &= [W_{23}(nR_{12}^*+1),\dots,W_{23}(nR_{23}^*)] \\
M_{123}^n &= [W_{13}^n, W_{13}^n\oplus W_{23}^n, W_{123}^n].
\end{align*}

\noindent Then we have achieved a rate vector of
\begin{align*}
{\bf R}^{(3)} &= {\bf R}^* + \left[ \begin{array}{c}
                              0 \\
                              0 \\
                              0 \\
                              -1 \\
                              -1 \\
                              -1 \\
                              2
                            \end{array} \right] R_{12}^* \\
&= {\bf R}^* + {\bf H}_{BC,3}(16)\Delta_{16}.
\end{align*}
with $\Delta_{16} = R_{12}^*$. Thus the 16 rate vectors satisfying equation (3) are achievable and by time sharing
between them, all rate vectors in the region given by equation (2) are achievable.

It remains to show that this region is equivalent to the one in equation (1), i.e. that for any ${\bf R}^* \in {\mathbb
R}_+^7$
\begin{equation*}
\left\{ \left. {\bf R} \in {\mathbb R}_+^7 \right| {\bf G}_{BC,3}^T{\bf R} \preceq {\bf G}_{BC,3}^T{\bf R} \right\}
\equiv \left\{ \left. {\bf R} \in {\mathbb R}_+^7 \right| {\bf R} \preceq {\bf R}^* + {\bf H}_{BC,3}\Delta, \forall
{\Delta} \in {\mathbb R}_+^{16} \right\}.
\end{equation*}

\noindent On the left is the characterization of the polytope in terms of the hyperplanes bounding it. On the right is
the dual characterization in terms of the edges of the polytope (1-dimensional facets). This equivalence can be
demonstrated using computer software such as {\it polymake}.

\subsection{Converse Part}
To establish the converse we now present, for each ${\bf R}^*$, a particular (deterministic) broadcast channel and show
its capacity region is equal to (1). Let the input alphabet ${\cal X} = \prod_{{\cal I} \subseteq \{1,2,3\}}
\{0,\dots,2^{nR_{\cal I}^*}-1 \}$ with the $i$th channel input
\begin{equation*}
X(i) = [X_1(i),X_2(i),X_3(i), X_{12}(i),X_{13}(i),X_{23}(i),X_{123}(i)]
\end{equation*}

\noindent so that each $X_{\cal I} \in \{0,\dots,2^{nR_{\cal I}^*}-1 \}$, and let ${\cal Y}_i \in \prod_{{\cal I}
\subseteq \{1,2,3\}, i \in I} \{0,\dots,2^{nR_{\cal I}^*}-1 \}$ for $i=1,2,3$ with
\begin{align*}
Y_1(i) &= [X_1(i),X_{12}(i),X_{13}(i),X_{123}(i)] \\
Y_2(i) &= [X_2(i),X_{12}(i),X_{23}(i),X_{123}(i)] \\
Y_3(i) &= [X_3(i),X_{13}(i),X_{23}(i),X_{123}(i)].
\end{align*}

\begin{figure}
\centering
\includegraphics[width=450pt]{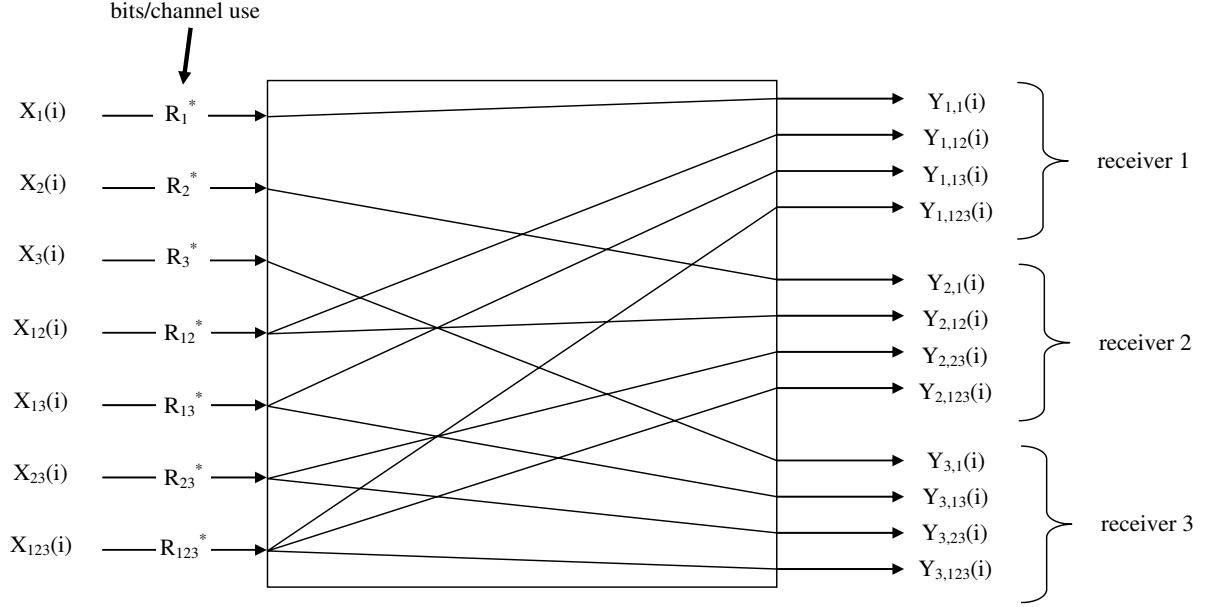}
\caption{Illustration of the deterministic broadcast channel used in converse.}\label{fig:specific_BC}
\end{figure}

\noindent See figure \ref{fig:specific_BC} for an illustration of the channel. Suppose the channel is used $n$ times.
The messages to be transmitted are $W_{\cal I}\sim U(\{1,\dots,2^{nR_{\cal I}}\})$ and mutually independent. Denote the
length-$n$ vector of channel inputs by ${\bf X}$ and the length-$n$ vectors of channel outputs by ${\bf Y}_1, {\bf
Y}_2$ and ${\bf Y}_3$. Let ${\bf G}_{BC,3}(i)$ denote the $i$th column of ${\bf G}_{BC,3}$. We wish to show
\begin{equation}
{\bf G}_{BC,3}(i)^T{\bf R} \leq {\bf G}_{BC,3}(i)^T{\bf R}^*
\end{equation}

\noindent for $i = 1,\dots,15$. Before this we introduce some notation. Suppose $\cal A$ is a collection of subsets of
$\{1,2,3\}$, for example ${\cal A} = \{1,2,12,13,123\}$. The collection $\cal A$ should be thought as the indices of a
subset of the seven channel links (see figure \ref{fig:specific_BC}), for example ${\cal A} = \{1,123\}$ corresponds to
two links, the private one from $X_1$ to $Y_{1,1}$ and the common one from $X_{123}$ to
$Y_{1,123},Y_{2,123},Y_{3,123}$. By $\lfloor {\cal A} \rfloor$ we denote the indices of the messages intended for those
receivers cut by ${\cal A}$. For example if ${\cal A} = \{1,2,12,13,123\}$ then all links to the first receiver are
cut, but not all links to the second or the third. As the first receiver is sent the messages $W_1,W_{12},W_{13}$ and
$W_{123}$, we have $\lfloor {\cal A} \rfloor = \{1,12,13,123\}$. As another example let ${\cal A} =
\{2,3,12,13,23,123\}$. Then all links to both the second and third receivers are cut and $\lfloor {\cal A} \rfloor =
\{2,3,12,13,23,123\} = {\cal A}$. As a final example if ${\cal A} = \{1,12,23,123\}$ no receivers are completely cut,
and thus $\lfloor A \rfloor = \phi$.

\begin{lemma} Let ${\cal A}_1,{\cal A}_2$ and ${\cal A}_3$ be three collections of subsets of $\{1,2,3\}$ such that
either ${\cal A}_1 \subseteq {\cal A}_2 \cup {\cal A}_3$, ${\cal A}_2 \subseteq {\cal A}_1 \cup {\cal A}_3$ or ${\cal
A}_3 \subseteq {\cal A}_1 \cup {\cal A}_2$. Then
\begin{multline*}
\sum_{{\cal I}\in \lfloor {\cal A}_1 \cup {\cal A}_2 \cup {\cal A}_3 \rfloor} R_{\cal I} + \sum_{
\scriptsize
\begin{array}{c}
  {\cal I}\in \lfloor {\cal A}_1 \cup {\cal A}_2 \rfloor \cap \\
  \lfloor {\cal A}_1 \cup {\cal A}_3 \rfloor \cap \lfloor {\cal A}_2 \cup {\cal A}_3 \rfloor
\end{array}}
R_{\cal I} + \sum_{ \scriptsize
\begin{array}{c}
  {\cal I}\in \lfloor {\cal A}_1 \rfloor \cap \\
  \lfloor {\cal A}_2 \rfloor \cap \lfloor {\cal A}_3 \rfloor
\end{array}}
R_{\cal I} \; \; \le \; \sum_{{\cal I}\in {\cal A}_1} R_{\cal I}^* + \sum_{{\cal I}\in {\cal A}_2} R_{\cal I}^* +
\sum_{{\cal I}\in {\cal A}_3} R_{\cal I}^*
\end{multline*}
\end{lemma}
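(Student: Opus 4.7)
The plan is to apply Fano's inequality to multiple cuts of the deterministic channel and combine the resulting entropy bounds via submodularity. For any collection ${\cal C}$ of link indices containing every link of at least one receiver, Fano applied to each receiver cut by ${\cal C}$, together with the mutual independence of the messages and the alphabet bound $H({\bf X}_{\cal C}) \le \sum_{{\cal I}\in{\cal C}} nR_{\cal I}^*$, yields the elementary single-cut bound
$$\sum_{{\cal I}\in\lfloor{\cal C}\rfloor} R_{\cal I} \;\le\; \sum_{{\cal I}\in{\cal C}} R_{\cal I}^* + \epsilon_n.$$
First I would apply this with ${\cal C} = {\cal A}_1 \cup {\cal A}_2 \cup {\cal A}_3$, immediately handling the first sum on the LHS since $\lfloor{\cal C}\rfloor = S_1$.

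For the remaining two sums, simply summing the Fano bounds for ${\cal A}_1,{\cal A}_2,{\cal A}_3$ is too loose in general, because the LHS involves the intersection sets $S_2$ and $S_3$ rather than the individual $\lfloor{\cal A}_i\rfloor$. Instead, I would retain the joint-Fano form for each of the six cuts $\{{\cal A}_i\}$, $\{{\cal A}_i\cup{\cal A}_j\}$ and combine them via the three-set submodularity
$$\sum_{i=1}^3 H({\bf X}_{{\cal A}_i}) \;\ge\; H({\bf X}_T) + H({\bf X}_{\cal B}) + H({\bf X}_{{\cal A}_1\cap{\cal A}_2\cap{\cal A}_3}),$$
where $T={\cal A}_1\cup{\cal A}_2\cup{\cal A}_3$ and ${\cal B}=({\cal A}_1\cap{\cal A}_2)\cup({\cal A}_1\cap{\cal A}_3)\cup({\cal A}_2\cap{\cal A}_3)$. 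This inequality follows from two applications of the standard two-set submodularity $H({\bf X}_A)+H({\bf X}_B)\ge H({\bf X}_{A\cup B})+H({\bf X}_{A\cap B})$.

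The hypothesis that one of the ${\cal A}_i$'s lies in the union of the other two (say ${\cal A}_3\subseteq{\cal A}_1\cup{\cal A}_2$) enters by collapsing $T={\cal A}_1\cup{\cal A}_2$ and simplifying ${\cal B}=({\cal A}_1\cap{\cal A}_2)\cup{\cal A}_3$; it also guarantees that the relevant cuts actually sever some receiver, so the Fano bounds are nontrivial. This simplification is what makes the combinatorial matching of cut-set bounds to the LHS of the lemma tractable.

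The main obstacle is bounding the middle term $R(S_2)$ tightly. Because $S_2 = \lfloor{\cal A}_1\cup{\cal A}_2\rfloor \cap \lfloor{\cal A}_1\cup{\cal A}_3\rfloor \cap \lfloor{\cal A}_2\cup{\cal A}_3\rfloor$ is an intersection and not a single $\lfloor\cdot\rfloor$, a single Fano application does not suffice; in particular small examples show $\lfloor{\cal B}\rfloor$ can be strictly smaller than $S_2$. The crucial point will be that for each ${\cal I}\in S_2$ the message $W_{\cal I}$ is simultaneously decodable from ${\bf X}_{{\cal A}_i\cup{\cal A}_j}$ for each of the three pairs, via possibly different receivers. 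Exploiting this joint decodability together with the identity $({\cal A}_1\cup{\cal A}_2)\cap({\cal A}_1\cup{\cal A}_3)\cap({\cal A}_2\cup{\cal A}_3) = {\cal B}$ should let me absorb $nR(S_2)$ into the $H({\bf X}_{\cal B})$ term produced by submodularity; an analogous but easier argument bounds $nR(S_3)$ against $H({\bf X}_{{\cal A}_1\cap{\cal A}_2\cap{\cal A}_3})$. Summing the three bounds, using $\sum_i H({\bf X}_{{\cal A}_i}) \le n\sum_i R^*({\cal A}_i)$, dividing by $n$ and letting $\epsilon_n\to 0$ then yields the lemma.
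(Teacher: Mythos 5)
Your three-set submodularity inequality is valid, but the ``absorption'' step it feeds into is where the lemma's entire content lies, and that step fails. You need $H({\bf X}_{\cal B}) \ge n\sum_{{\cal I}\in S_2} R_{\cal I} - o(n)$, and this is false in general. For ${\cal I}\in S_2$ the message $W_{\cal I}$ is recoverable from each of ${\bf X}_{{\cal A}_1\cup{\cal A}_2}$, ${\bf X}_{{\cal A}_1\cup{\cal A}_3}$, ${\bf X}_{{\cal A}_2\cup{\cal A}_3}$, but the three decoding receivers can be \emph{distinct}, and then $W_{\cal I}$ need not be a function of ${\bf X}_{\cal B}$. Concretely, if ${\cal I}=\{1,2\}$ the encoder can route $W_{12}$ redundantly over the private links $\{1\}$ and $\{2\}$ only; each pairwise union ${\cal A}_i\cup{\cal A}_j$ can still sever a whole receiver (via link $\{1\}$ or $\{2\}$) and hence recover $W_{12}$, while ${\cal B}$ contains neither private link, so ${\bf X}_{\cal B}$ carries no information about $W_{12}$. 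The same objection kills your bound of $H({\bf X}_{{\cal A}_1\cap{\cal A}_2\cap{\cal A}_3})$ against $n\sum_{{\cal I}\in S_3}R_{\cal I}$. Decodability from each of several cuts does not pass to their intersection of links, and that is precisely the phenomenon the lemma must handle.

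The paper avoids this loss by replacing the submodular inequality with the \emph{exact} identity $H(Z_1)+H(Z_2)+H(Z_3)=H(Z_1,Z_2,Z_3)+I(Z_1Z_2;Z_1Z_3;Z_2Z_3)+I(Z_1;Z_2;Z_3)$, with $Z_i=\cup_{{\cal I}\in{\cal A}_i}{\bf X}_{\cal I}$ (Lemma~\ref{lem:HABC_expansion}). The triple mutual information $I(Z_1Z_2;Z_1Z_3;Z_2Z_3)$ genuinely dominates the $S_2$ rates: Lemma~\ref{lem:IgreaterthanH} gives $I({\cal X}_1;{\cal X}_2;{\cal X}_3)\ge H(W)$ whenever $H(W|{\cal X}_i)=0$ for all $i$ and one ${\cal X}_i$ lies in the union of the other two. (Taking $W={\bf X}_{\cal B}$ there shows $I(Z_1Z_2;Z_1Z_3;Z_2Z_3)\ge H({\bf X}_{\cal B})$, so the triple MI is strictly the stronger lower bound, and the extra slack is exactly what covers distinct decoding receivers.) You also misidentify the role of the containment hypothesis on ${\cal A}_1,{\cal A}_2,{\cal A}_3$: it is not for tidying set algebra or ensuring nontrivial cuts; it is the precise condition, via Lemma~\ref{lem:positive_I}, that keeps $I(Z_1;Z_2;Z_3)\ge n\sum_{{\cal I}\in S_3}R_{\cal I}$, whereas for the pairwise-union triple the containment holds automatically so the $S_2$ term comes free. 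Without it the triple MI could be negative and the identity would yield nothing.
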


\noindent This lemma is a generalization of the cutset bounds to multiple subsets of cuts. Indeed if we set ${\cal A}_2
= \phi$ and ${\cal A}_3 = \phi$ we are left with
\begin{equation*}
\sum_{{\cal I}\in \lfloor {\cal A}_1 \rfloor} R_{\cal I} \le \sum_{{\cal I}\in {\cal A}_1} R_{\cal I}^*
\end{equation*}

\noindent which are precisely the cutset bounds.

\begin{proof}
\begin{align*}
&n\left( \sum_{{\cal I}\in {\cal A}_1} R_{\cal I}^* + \sum_{{\cal I}\in {\cal A}_2} R_{\cal I}^* + \sum_{{\cal I}\in
{\cal A}_3} R_{\cal I}^* \right) \\
&\quad\quad \ge \sum_{{\cal I}\in {\cal A}_1} H({\bf X}_{\cal I}) + \sum_{{\cal I}\in {\cal A}_2}
H({\bf X}_{\cal I}) + \sum_{{\cal I}\in {\cal A}_3} H({\bf X}_{\cal I}) \\
&\quad\quad\ge H\left(\cup_{{\cal I}\in {\cal A}_1} {\bf X}_{\cal I}\right) + H\left(\cup_{{\cal I}\in {\cal A}_2} {\bf
X}_{\cal
I}\right) + H\left(\cup_{{\cal I}\in {\cal A}_3} {\bf X}_{\cal I}\right) \\
&\quad\quad = H\left(\cup_{{\cal I}\in {\cal A}_1 \cup {\cal A}_2 \cup {\cal A}_3} {\bf X}_{\cal I} \right) +
I\left(\cup_{{\cal I}\in {\cal A}_1 \cup {\cal A}_2} {\bf X}_{\cal I}; \cup_{{\cal I}\in {\cal A}_1 \cup {\cal A}_3}
{\bf X}_{\cal I}; \cup_{{\cal I}\in {\cal A}_2 \in {\cal A}_3} {\bf X}_{\cal I} \right)\\
&\quad\quad\quad\quad + I\left(\cup_{{\cal I}\in {\cal A}_1} {\bf X}_{\cal I}; \cup_{{\cal I}\in {\cal A}_2} {\bf
X}_{\cal I}; \cup_{{\cal I}\in {\cal A}_3} {\bf X}_{\cal I} \right) \\
&\quad\quad\ge H\left( \cup_{{\cal I}\in \lfloor {\cal A}_1 \cup {\cal A}_2 \cup {\cal A}_3 \rfloor } {\bf W}_{\cal I}
\right) + H\left( \cup_{{\cal I}\in \lfloor {\cal A}_1 \cup {\cal A}_2 \rfloor \cap \lfloor {\cal A}_1 \cup {\cal A}_3
\rfloor \cap \lfloor {\cal A}_2 \cup {\cal A}_3 \rfloor } {\bf W}_{\cal I} \right) \\
&\quad\quad\quad\quad + H\left( \cup_{{\cal I}\in \lfloor {\cal A}_1 \rfloor \cap \lfloor {\cal A}_2 \rfloor \cap
\lfloor {\cal A}_3 \rfloor } {\bf W}_{\cal I} \right) + \epsilon_n \\
&\quad\quad = n\Bigg( \sum_{{\cal I}\in \lfloor {\cal A}_1 \cup {\cal A}_2 \cup {\cal A}_3 \rfloor} R_{\cal I} +
\sum_{{\cal I}\in \lfloor {\cal A}_1 \cup {\cal A}_2 \rfloor \cap \lfloor {\cal A}_1 \cup {\cal A}_3 \rfloor \cap
\lfloor {\cal A}_2 \cup {\cal A}_3 \rfloor } R_{\cal I} + \sum_{{\cal I}\in \lfloor {\cal A}_1 \rfloor \cap \lfloor
{\cal A}_2 \rfloor \cap \lfloor {\cal A}_3 \rfloor} R_{\cal I} \Bigg)
\end{align*}

\noindent where the third step follows from lemma \ref{lem:HABC_expansion} in the appendix and the fourth from the
requirement $P_e^{(n)}\rightarrow 0$ (Fano's inequality) and lemma \ref{lem:IgreaterthanH} in the appendix.
\end{proof}

\noindent Applying lemma 5.1 to the sets of indices in table 1 establishes equation (\ref{eqn:BC_region}) for columns
$i=1,2,3,4,5,6,7,8,9,10,12,13,14,15$ of ${\bf G}_{BC,3}$. Unfortunately for column $i=11$ the condition that either
${\cal A}_1 \subseteq {\cal A}_2 \cup {\cal A}_3$, ${\cal A}_2 \subseteq {\cal A}_1 \cup {\cal A}_3$ or ${\cal A}_3
\subseteq {\cal A}_1 \cup {\cal A}_2$ must hold, is violated. Consequently the 11th converse bound is established in a
different fashion.

Let ${\cal A}_1,{\cal A}_2,{\cal A}_3$ be defined by the 11th row of table 1. Then

\begin{align*}
&n\left( \sum_{{\cal I}\in {\cal A}_1} R_{\cal I}^* + \sum_{{\cal I}\in {\cal A}_2} R_{\cal I}^* + \sum_{{\cal I}\in
{\cal A}_3} R_{\cal I}^* \right) \\
&\quad \ge \sum_{{\cal I}\in {\cal A}_1} H({\bf X}_{\cal I}) + \sum_{{\cal I}\in {\cal A}_2} H({\bf X}_{\cal I}) +
\sum_{{\cal
I}\in {\cal A}_3} H({\bf X}_{\cal I}) \\
&\quad \ge H\left(\cup_{{\cal I}\in {\cal A}_1} {\bf X}_{\cal I}\right) + H\left(\cup_{{\cal I}\in {\cal A}_2} {\bf
X}_{\cal I}\right) + H\left(\cup_{{\cal I}\in {\cal A}_3} {\bf X}_{\cal I}\right) \\
&\quad \ge H\left(\cup_{{\cal I}\in {\cal A}_1} {\bf X}_{\cal I}\right) + H\left(\cup_{{\cal I}\in {\cal A}_2}
{\bf X}_{\cal I}\right) + H\left(\cup_{{\cal I}\in {\cal A}_3 \cup \{123\} } {\bf X}_{\cal I}\right) - H\left( {\bf X}_{123} \right) \\
&\quad \ge H( \cup_{{\cal I}\in {\cal A}_1} {\bf X}_{\cal I} | {\bf W}_1,{\bf W}_{12},{\bf W}_{13},{\bf W}_{123}) + H({\bf W}_1,{\bf W}_{12},{\bf W}_{13},{\bf W}_{123}) \\
&\quad\quad + H(\cup_{{\cal I}\in {\cal A}_2} {\bf X}_{\cal I} | {\bf W}_2,{\bf W}_{12},{\bf W}_{23},{\bf W}_{123}) + H({\bf W}_2,{\bf W}_{12},{\bf W}_{23},{\bf W}_{123}) \\
&\quad\quad + H(\cup_{{\cal I}\in {\cal A}_3 \cup \{123\}} {\bf X}_{\cal I} | {\bf W}_3,{\bf W}_{13},{\bf W}_{23}) + H({\bf W}_3,{\bf W}_{13}, {\bf W}_{23}) - H({\bf X}_{123})\\
&\quad \ge H( {\bf X}_{123} | {\bf W}_1,{\bf W}_{12},{\bf W}_{13},{\bf W}_{123}) + H({\bf W}_1,{\bf W}_{12},{\bf W}_{13},{\bf W}_{123}) \\
&\quad\quad + H( {\bf X}_{123} | {\bf W}_2,{\bf W}_{12},{\bf W}_{23},{\bf W}_{123}) + H({\bf W}_2,{\bf W}_{12},{\bf W}_{23},{\bf W}_{123}) \\
&\quad\quad + H({\bf X}_{123} | {\bf W}_3,{\bf W}_{13},{\bf W}_{23}) + H({\bf W}_3,{\bf W}_{13}, {\bf W}_{23}) - H({\bf X}_{123})\\
&\quad \ge H({\bf W}_1,{\bf W}_{12},{\bf W}_{13},{\bf W}_{123}) + H({\bf W}_2,{\bf W}_{12},{\bf W}_{23},{\bf W}_{123})
+ H({\bf W}_3,{\bf W}_{13}, {\bf W}_{23}) \\
&\quad = H({\bf W}_1)+H({\bf W}_2)+H({\bf W}_3)+2H({\bf W}_{12}) +2H({\bf W}_{13})+2H({\bf W}_{23})+2H({\bf W}_{123}) \\
&\quad = n(R_1 + R_2 + R_3 + 2R_{12} + 2R_{13} + 2R_{23} + 2R_{123})
\end{align*}

\noindent where the fourth step follows from the requirement $P_e^(n)\rightarrow 0$ (Fano's inequality), the sixth from
lemma \ref{lem:zeroint condsets} and the seventh from the independence of the messages.

\begin{figure*}
\footnotesize \centering
\begin{tabular}{|c|c|c|c|}
\hline
$i$ & ${\cal A}_1$ & ${\cal A}_2$ & ${\cal A}_3$ \\
\hline\hline
1 & $ \{1\}, \{12\}, \{13\}, \{123\} $ & $\phi$ & $\phi$ \\
\hline
2 & $ \{2\}, \{12\}, \{23\}, \{123\}$ & $\phi$ & $\phi$ \\
\hline
3 & $ \{3\}, \{13\}, \{23\}, \{123\}$ & $\phi$ & $\phi$ \\
\hline
4 & $ \{1\}, \{2\}, \{12\}, \{13\}, \{23\}, \{123\}$ & $\phi$ & $\phi$ \\
\hline
5 & $ \{1\}, \{3\}, \{12\}, \{13\}, \{23\}, \{123\}$ & $\phi$ & $\phi$ \\
\hline
6 & $ \{2\}, \{3\}, \{12\}, \{13\}, \{23\}, \{123\}$ & $\phi$ & $\phi$ \\
\hline
7 & $ \{1\}, \{2\}, \{3\}, \{12\}, \{13\}, \{23\}, \{123\}$ & $\phi$ & $\phi$ \\
\hline
8 & $ \{1\}, \{3\}, \{12\}, \{13\}, \{123\}$ & $\{2\}, \{12\}, \{23\}, \{123\}$ & $\phi$ \\
\hline
9 & $ \{1\}, \{2\}, \{12\}, \{13\}, \{123\}$ & $\{3\}, \{13\}, \{23\}, \{123\}$ & $\phi$ \\
\hline
10 & $ \{1\}, \{2\}, \{12\}, \{23\}, \{123\}$ & $\{3\}, \{13\}, \{23\}, \{123\}$ & $\phi$ \\
\hline
11 & $ \{1\}, \{12\}, \{13\}, \{123\}$ & $\{2\}, \{12\}, \{23\}, \{123\}$ & $\{3\}, \{13\}, \{23\}$ \\
\hline
12 & $ \{1\}, \{2\}, \{12\}, \{13\}, \{123\}$ & $\{2\}, \{3\}, \{12\}, \{23\}, \{123\}$ & $\{3\}, \{13\}, \{23\}, \{123\}$ \\
\hline
13 & $ \{1\}, \{3\}, \{13\}, \{23\}, \{123\}$ & $\{1\}, \{2\}, \{12\}, \{23\}, \{123\}$ & $\{1\}, \{12\}, \{13\}, \{123\}$ \\
\hline
14 & $ \{1\}, \{2\}, \{12\}, \{13\}, \{123\}$ & $\{2\}, \{12\}, \{23\}, \{123\}$ & $\{1\}, \{3\}, \{13\}, \{23\}, \{123\}$ \\
\hline
15 & $ \{1\}, \{2\}, \{12\}, \{13\}, \{123\}$ & $\{2\}, \{3\}, \{12\}, \{23\}, \{123\}$ & $\{1\}, \{3\}, \{13\}, \{23\}, \{123\}$ \\
\hline
\end{tabular}
\caption{The $(1,1,1)$-multicast region for the broadcast channel, $L=2$.}
\end{figure*}

\section{Proof of Theorem 4.2}
The direct part of this proof is entirely analogous to the direct part for the broadcast channel. This establishes the
universal achievability of the ${\bf R}^*$-multicast region. The converse part is different. For each ${\bf R}^*$ we
present a sequence of channels. The limiting intersection of the capacity regions of these channels is the region in
equation (\ref{eqn:MAC_region}). The capacity regions of these channels are not precisely computed, but only outer
bounded in a manner sufficient to establish their limiting intersection.

\subsection{Direct part}
As this part of the proof is trivial and entirely analogous section 5.1 we only provide a sketch. In essence we need to
establish that each of the columns of ${\bf H}_{MAC,3}$ are achievable in the sense of section 5.1. The first column is
achieved by transmitting additional $M_{12}$ bits on the $W_1$ channel, the second column is achieved by transmitting
additional $M_{13}$ bits on the $W_1$ channel, the third column is achieved by transmitting additional $M_{12}$ bits on
the $W_2$ channel, and so on. The last column is achieved by lowering the rate of the $M_{123}$ message.

\subsection{Converse part}
For each ${\bf R}^*$ we present a sequence of deterministic channels with capacity region tending to the region in
equation (\ref{eqn:MAC_region}). The capacity regions of these channels are not explicitly computed, only outer
bounded, but we show the limiting outer bound is tight. The sequence is parameterized by the integer $k$.

Let ${\bf R}^*$ be given and assume its elements are rational. Denote their numerators and denominators by $N_{\cal I}$
and $D_{\cal I}$, for ${\cal I} \subseteq \{1,2,3\}$ so that ${\bf R}^* = (N_1/D_1,\dots,N_{123}/D_{123})$. Let $l =
LCM(D_1,\dots,D_{123})$. The $k$th channel is defined as follows. See figure \ref{fig:coor_channel} for a pictorial
representation. Every $k\times l$ time steps the channel takes in a triple of inputs and outputs one symbol. The input
alphabet is ${\cal X}={\cal X}_1\times {\cal X}_2 \times {\cal X}_3$ where
\begin{align*}
{\cal X}_1 &= \{0,1\}^{kN_1}\times \{0,1\}^{kN_{12}} \times \{0,1\}^{kN_{13}} \times \{0,1\}^{kN_{123}} \\
{\cal X}_2 &= \{0,1\}^{kN_2}\times \{0,1\}^{kN_{12}} \times \{0,1\}^{kN_{13}} \times \{0,1\}^{kN_{123}} \\
{\cal X}_3 &= \{0,1\}^{kN_3}\times \{0,1\}^{kN_{13}} \times \{0,1\}^{kN_{23}} \times \{0,1\}^{kN_{123}}
\end{align*}

\noindent The output alphabet is
\begin{multline*}
{\cal Y} = \{0,1\}^{kN_1} \times \{0,1\}^{kN_2} \times \{0,1\}^{kN_3} \\
\times \left( \{0,1\}^{kN_{12}} \cup \{e\} \right) \times \left( \{0,1\}^{kN_{13}} \cup \{e\} \right) \times \left(
\{0,1\}^{kN_{23}} \cup \{e\} \right) \times \left( \{0,1\}^{kN_{123}} \cup \{e\} \right)
\end{multline*}

\noindent where $e$ is an output symbol that can be thought of as an erasure. The channel thus decomposes into one with
$4\times3=12$ inputs and 7 outputs. The outputs at time $i$ are related deterministically to the inputs at time $i$ via
\begin{align*}
Y_1(i) &= X_{1,1}(i) \\
Y_2(i) &= X_{2,1}(i) \\
Y_3(i) &= X_{3,1}(i)
\end{align*}
\begin{align*}
Y_{12}(i) &= \left\{
         \begin{array}{ll}
           X_{1,12}(i) & \hbox{if $X_{1,12}(i)=X_{2,12}(i)$} \\
           e & \hbox{otherwise}
         \end{array}
       \right. \\
Y_{13}(i) &= \left\{
         \begin{array}{ll}
           X_{1,13}(i) & \hbox{if $X_{1,13}(i)=X_{3,13}(i)$} \\
           e & \hbox{otherwise}
         \end{array}
       \right. \\
Y_{23}(i) &= \left\{
         \begin{array}{ll}
           X_{2,23}(i) & \hbox{if $X_{2,23}(i)=X_{3,23}(i)$} \\
           e & \hbox{otherwise}
         \end{array}
       \right.
\end{align*}
\begin{align*}
Y_{123}(i) &= \left\{
         \begin{array}{ll}
           X_{1,123}(i) & \hbox{if $X_{1,123}(i)=X_{2,123}(i)=X_{3,123}$} \\
           e & \hbox{otherwise}
         \end{array}
       \right.
\end{align*}

\noindent The input streams thus consist of blocks of $kN_i$ bits. The output streams $Y_1(i),Y_2(i),Y_3(i)$ match
their associated input streams. The output stream $Y_{12}(i)$ matches its associated input streams if and only if the
input streams match at each bit, otherwise the erasure symbol is outputted. Likewise for the other output streams. For
this reason the boxes inside the channel in figure \ref{fig:channel_MAC} are labeled 'coordination channel'. See figure
\ref{fig:coor_channel} for a pictorial example of one such coordination channel. The idea of the coordination channels
is that in the limit of large $k$, they only let common information through. This should be intuitive from their
definition and from the figure.

\begin{figure}
\centering
\includegraphics[width=450pt]{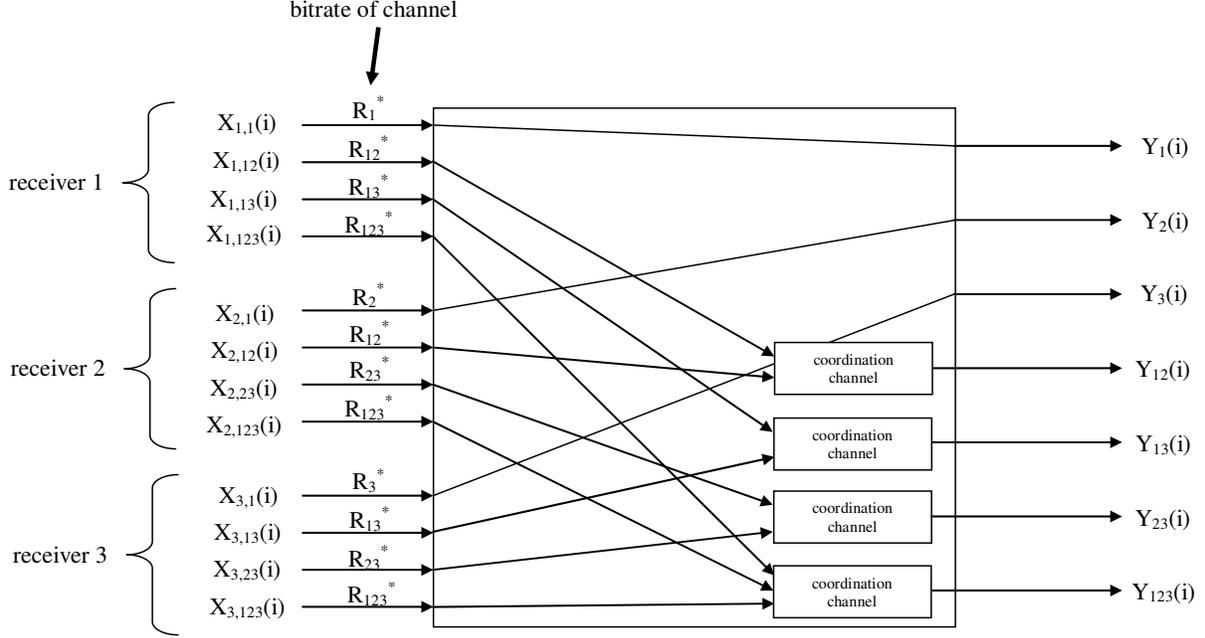}
\caption{Illustration of the deterministic multiple access channel used in converse.} \label{fig:channel_MAC}
\end{figure}

\begin{figure}
\centering
\includegraphics[width=350pt]{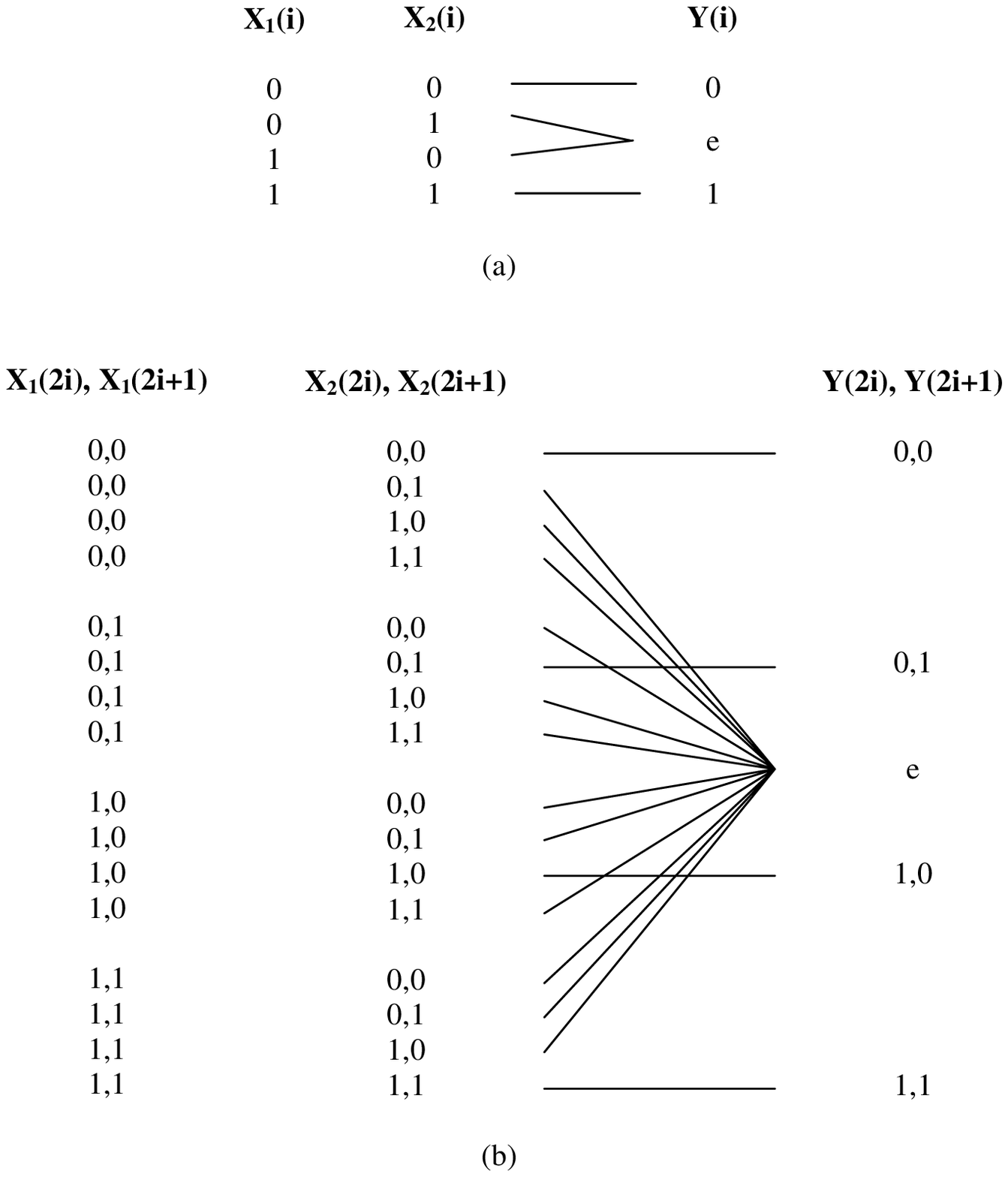}
\caption{(a) a coordination channel for $k=1$. (b) a coordination channel for $k=2$.} \label{fig:coor_channel}
\end{figure}

\noindent We now bound the capacity region of this channel. It is clear that we can further decompose the channel into
seven parallel channels, one linking $X_{1,1}$ and $Y_1$, one linking $X_{2,1}$ and $Y_2$, one linking $X_{3,1}$ and
$Y_3$, one linking $(X_{1,12},X_{2,12})$ and $Y_{12}$, one linking $(X_{1,13},X_{3,13})$ and $Y_{13}$, one linking
$(X_{2,23},X_{3,23})$ and $Y_{23}$, and one linking $(X_{1,123},X_{2,123},X_{3,123})$ and $Y_{123}$. The capacity
region of the channel in question is thus the Minkowski sum of the capacity regions of these seven channels. Denote
these seven capacity regions by ${\cal C}_{\cal I}^k$ for ${\cal I} \subseteq \{1,2,3\}$. Then the capacity region of
our channel is given by
\begin{equation*}
{\cal C}^k = \sum_{{\cal I} \subseteq \{1,2,3\}} {\cal C}_{\cal I}^k.
\end{equation*}

\noindent where sigma denotes the Minkowski sum. In particular we wish to compute the limiting intersection of these
regions
\begin{align*}
{\cal C} &= \lim_{K\rightarrow \infty} \bigcap_{k=1}^{K} {\cal C}^k \\
&= \sum_{{\cal I} \subseteq \{1,2,3\}} \lim_{K\rightarrow \infty} \bigcap_{k=1}^{K} {\cal C}_{\cal I}^k \\
&= \sum_{{\cal I} \subseteq \{1,2,3\}} {\cal C}_{\cal I}.
\end{align*}

\begin{lemma} \
\begin{enumerate}
  \item The region ${\cal C}_1$ is the set of all ${\bf R} \in {\mathbb R}^7_+$ satisfying $R_1 + R_{12} + R_{13} + R_{123} \le
R_1^*$ and $R_{\cal I} = 0$ for ${\cal I} \in \{2,3,23\}$,
  \item The region ${\cal C}_2$ is the set of all ${\bf R} \in {\mathbb R}^7_+$ satisfying $R_2 + R_{12} + R_{23} +
R_{123} \le R_2^*$ and $R_{\cal I} = 0$ for ${\cal I} \in \{1,3,13\}$,
  \item The region ${\cal C}_3$ is the set of all ${\bf R} \in {\mathbb R}^7_+$ satisfying $R_3 + R_{13} + R_{23} +
R_{123} \le R_3^*$ and $R_{\cal I} = 0$ for ${\cal I} \in \{1,2,12\}$,
  \item The region ${\cal C}_{12}$ is the set of all ${\bf R} \in {\mathbb R}^7_+$ satisfying $R_{12} + R_{123} \le
R_{12}^*$ and $R_{\cal I} = 0$ for ${\cal I} \in \{1,2,3,13,23\}$,
  \item The region ${\cal C}_{13}$ is the set of all ${\bf R} \in {\mathbb R}^7_+$ satisfying $R_{13} + R_{123} \le
R_{13}^*$ and $R_{\cal I} = 0$ for ${\cal I} \in \{1,2,3,12,23\}$,
  \item The region ${\cal C}_{23}$ is the set of all ${\bf R} \in {\mathbb R}^7_+$ satisfying $R_{23} + R_{123} \le
R_{23}^*$ and $R_{\cal I} = 0$ for ${\cal I} \in \{1,2,3,12,13\}$,
  \item The region ${\cal C}_{123}$ is the set of all ${\bf R} \in {\mathbb R}^7_+$ satisfying $R_{123} \le
R_{123}^*$ and $R_{\cal I} = 0$ for ${\cal I} \in \{1,2,3,12,13,23\}$.
\end{enumerate}
\end{lemma}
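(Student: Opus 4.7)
The plan is to establish each of the seven items by analyzing the multicast capacity region of the corresponding parallel sub-channel ${\cal C}_{\cal I}^k$, and then passing to the limit $\lim_{K\to\infty}\bigcap_{k=1}^K {\cal C}_{\cal I}^k$. The sub-channels fall into three qualitatively different classes---private point-to-point (items 1--3), pairwise coordination (items 4--6), and three-way coordination (item 7)---and I would treat each class separately.

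Items 1--3 are essentially immediate. The sub-channel $X_{i,1}\to Y_i$ is a noiseless point-to-point link carrying $kN_i$ bits per $kl$ time steps, so its per-time-step capacity is $R_i^*$ regardless of $k$; the region ${\cal C}_i^k$ is $k$-independent and coincides with its limit ${\cal C}_i$. The messages accessible to transmitter $i$ are precisely those indexed by subsets containing $i$ (namely $\{i\},\{ij\},\{ik\},\{ijk\}$), so all other $R_{\cal I}$ must vanish. Any allocation of the $R_i^*$-bit budget among the four accessible messages is achievable, giving the sum-rate bound.

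Items 4--6 are the crux. For achievability of the region for ${\cal C}_{12}^k$, have transmitters 1 and 2 deterministically encode $(W_{12},W_{123})$ as the same $kN_{12}$-bit string per block; the output is then never an erasure, delivering $R_{12}+R_{123}\le R_{12}^*$ noiselessly to both receivers 1 and 2 for every $k$. For the converse, messages indexed by ${\cal I}\in\{1,2,3,13,23\}$ cannot carry positive rate here because each such message either is not accessible to both transmitters 1 and 2 or is not required by both receivers served by this sub-channel. For the sum-rate bound, I would argue that vanishing overall error probability forces the per-block disagreement probability $\Pr(X_{1,12}(j)\ne X_{2,12}(j))$ to vanish (an erasure destroys all $kN_{12}$ bits and cannot be masked by other sub-channels in the Minkowski decomposition as $k\to\infty$), and when $X_{1,12}=X_{2,12}$ both strings must agree as deterministic functions of the messages jointly accessible to both transmitters, namely $(W_{12},W_{123})$. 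The effective noiseless rate through the sub-channel is then at most $R_{12}^*$. Items 5 and 6 are symmetric.

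Item 7 is entirely analogous to items 4--6, but with three-way coordination forcing $X_{1,123}=X_{2,123}=X_{3,123}$ asymptotically; the only message in the intersection of all three transmitters' message sets is $W_{123}$, yielding $R_{123}\le R_{123}^*$ with all other rates zero. The main obstacle I anticipate is making the limiting converse for items 4--7 fully rigorous---specifically, showing that as $k\to\infty$ the intersection $\bigcap_k {\cal C}_{\cal I}^k$ truly collapses to the claimed region, ruling out clever coding schemes that exploit a small amount of controlled disagreement between transmitters to pass non-common information at a rate that does not vanish with $k$. This should follow from a careful entropy argument bounding the mutual information carried by the non-erased outputs by the entropy of the messages jointly accessible to the sub-channel's transmitters.
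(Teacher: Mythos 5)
Your treatment of items 1--3 and the achievability direction for items 4--7 matches the paper: the private sub-channels are $k$-independent noiseless pipes, and for ${\cal C}_{12}$ time-sharing between $(R_{12},R_{123})=(R_{12}^*,0)$ and $(0,R_{12}^*)$ gives the claimed sum-rate face. The sum-rate converse $R_{12}+R_{123}\le R_{12}^*+\delta_k$ also follows the way you expect, by bounding $H({\bf Y}_{12})\le m\log(2^{kN_{12}}+1)$.

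Where your plan has a real gap is the converse showing $R_{\cal I}=0$ for ${\cal I}\in\{1,2,3,13,23\}$ in item 4. You propose to argue first that the per-block disagreement probability must vanish, and then that agreement forces $X_{1,12}$ and $X_{2,12}$ to be deterministic functions of $(W_{12},W_{123})$ alone. Neither half is straightforward to make rigorous --- the coding scheme could tolerate a constant erasure fraction, and ``approximate agreement implies dependence only on shared messages'' is exactly the kind of soft statement that needs an entropy inequality behind it --- and you rightly flag this as the obstacle. The paper's proof avoids both difficulties by going directly through a single structural fact you did not name: conditioned on the shared messages $(W_{12},W_{123})$, the two inputs $X_{1,12}^i$ and $X_{2,12}^i$ are \emph{independent}, because transmitter 1's encoder sees only $(W_1,W_{12},W_{13},W_{123})$, transmitter 2's sees only $(W_2,W_{12},W_{23},W_{123})$, and all messages are mutually independent. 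This conditional independence makes $P(Y_{12}^i=x\mid W_{12},W_{123})$ a product $P_1(x)P_2(x)$ for non-erasure $x$, and one shows $H(Y_{12}^i\mid W_{12},W_{123})\le 3$ \emph{per $k$-block regardless of $k$} (one unit for the erasure mass, one each for the two $-p\log p$ terms). Combined with Fano's inequality this gives $R_1+R_2+R_3+R_{13}+R_{23}<3/k$ directly, with no need to control the erasure probability at all. Your closing sentence --- ``a careful entropy argument bounding the mutual information carried by the non-erased outputs by the entropy of the messages jointly accessible'' --- is aimed in the right direction, but the concrete lever is the conditional independence of the two encoders' outputs given the common messages, and without identifying it your sketch does not yet close.
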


\begin{proof}
The first three regions are trivial. We establish the fourth. The messages ${\bf W}_{\cal I}$ are uniformly distributed
on $\{1,\dots,2^{nR_{\cal I}}\}$ and mutually independent for fixed $n$. Denote the $n$-length output sequence by ${\bf
Y}_{12}$. By Fano's inequality we must have $H(\cup_{{\cal I} \subseteq \{1,2,3\}} {\bf W}_{\cal I}|{\bf Y}_{12}) \le
\epsilon_n$ with $\epsilon_n\rightarrow 0$ as $n\rightarrow \infty$ in order for the error probability to be made
arbitrarily small. Thus by the mutual independence of the messages we have
\begin{align*}
n(R_1+R_2+R_3+R_{13}+R_{23}) &= H({\bf W}_1)+H({\bf W}_2)+H({\bf W}_{13})+H({\bf W}_{23}) \\
&\le H(\cup_{{\cal I} \subseteq \{1,2,3\}} {\bf W}_{\cal I},{\bf Y}_{12}) - H({\bf W}_{12},{\bf W}_{123}) \\
&\le H({\bf Y}_{12}) - H({\bf W}_{12},{\bf W}_{123}) + \epsilon_n \\
&= H({\bf Y}_{12}|{\bf W}_{12},{\bf W}_{123}) + \epsilon_n
\end{align*}

\noindent Assume for simplicity that $n=mk$ where $m$ is an integer. We write ${\bf Y}_{12} = [{\bf Y}_{12}^1,\dots,
{\bf Y}_{12}^m]$ where ${\bf Y}_{12}^i$ represents the $i$th block of $k$ symbols in $\bf Y$. Similarly ${\bf X}^i$
represents the $i$th block of $k$ symbols in $\bf X$. We also use the shorthand ${\bf W} \equiv \{{\bf W}_{12},{\bf
W}_{123}\}$. We proceed to show that $H({\bf Y}_{12}|{\bf W})$ is sufficiently small.
\begin{align*}
H({\bf Y}_{12}|{\bf W}) &\le \sum_{i=1}^m H({\bf Y}_{12}^i|{\bf W}) \\
&=-\sum_{i=1}^m \sum_{{\bf w}} P({\bf W} = {\bf w}) \sum_x P({\bf Y}_{12}^i = x|{\bf W} = {\bf w})\log P({\bf Y}_{12}^i
= x|{\bf W} = {\bf w})
\end{align*}

\noindent From the channel definition we have
\begin{equation*}
P({\bf Y}_{12}^i = x|{\bf W} = {\bf w}) = \left\{
                                            \begin{array}{ll}
                                              P({\bf X}_{1,12}^i = x, {\bf X}_{2,12}^i = x|{\bf W}={\bf w}) & \hbox{$x \neq e$;} \\
                                              P({\bf X}_{1,12}^i \neq {\bf X}_{2,12}^i|{\bf W}={\bf w}) & \hbox{$x = e$.}
                                            \end{array}
                                          \right.
\end{equation*}

\noindent Using this expression and the conditional independence of ${\bf X}_{1,12}^i$ and ${\bf X}_{2,12}^i$ given
${\bf W}$ we have

\begin{align*}
&-\sum_x P({\bf Y}_{12}^i = x|{\bf W} = {\bf w})\log P({\bf Y}_{12}^i = x|{\bf W} = {\bf w}) \\
&\quad\quad =-P({\bf X}_{1,12}^i \neq {\bf X}_{2,12}^i|{\bf W}={\bf w})\log P({\bf X}_{1,12}^i \neq {\bf X}_{2,12}^i|{\bf W}={\bf w}) \\
&\quad\quad\quad\quad - \sum_x P({\bf X}_{1,12}^i = x|{\bf W}={\bf w})P({\bf X}_{2,12}^i = x|{\bf W}={\bf w})\log P({\bf X}_{1,12}^i = x|{\bf W}={\bf w}) \\
&\quad\quad\quad\quad - \sum_x P({\bf X}_{1,12}^i = x|{\bf W}={\bf w})P({\bf X}_{2,12}^i = x|{\bf W}={\bf w})\log
P({\bf X}_{2,12}^i = x|{\bf W}={\bf w}).
\end{align*}

\noindent The first term can be upper bounded by 1 (as $-x\log_2 x < 1$ for all $x\in {\mathbb R}$). The second term
can also be upper bounded by 1. To see this, maximize first over the distribution $P({\bf X}_{2,12}^i|{\bf W}={\bf w})$
and then over the distribution $P({\bf X}_{1,12}^i|{\bf W}={\bf w})$,
\begin{align*}
&\max_{\tiny \begin{array}{c}
        P({\bf X}_{1,12}^i|{\bf W}={\bf w}) \\
        P({\bf X}_{2,12}^i|{\bf W}={\bf w})
      \end{array}}
\sum_x P({\bf X}_{1,12}^i = x|{\bf W}={\bf w})P({\bf X}_{2,12}^i = x|{\bf W}={\bf w})\log P({\bf X}_{1,12}^i = x|{\bf
W}={\bf w}) \\
&\quad\quad = \max_{P({\bf X}_{1,12}^i|{\bf W}={\bf w})} \left[ \max_{x}P({\bf X}_{1,12}^i = x|{\bf W}={\bf w}) \right]
\log \left[ \max_{x}P({\bf X}_{1,12}^i = x|{\bf W}={\bf w}) \right] \\
&\quad\quad \le 1
\end{align*}

\noindent Likewise the third term can be upper bounded by 1. Thus putting this all together we have
\begin{align*}
H({\bf Y}_{12}|{\bf W}) &< 3\sum_{i=1}^m \sum_{{\bf w}} P({\bf W} = {\bf w}) \\
&= 3m
\end{align*}

\noindent and so
\begin{align*}
R_1+R_2+R_3+R_{13}+R_{23} &< 3m/n \\
&= 3/k
\end{align*}

\noindent Then by letting $k\rightarrow \infty$ we have $R_{\cal I}=0$ for ${\cal I} \in \{1,2,3,13,23\}$. From the
structure of the coordination channel it is clear that we can achieve points $(R_{12},R_{123}) = (R_{12}^*,0)$ and
$(R_{12},R_{123}) = (0,R_{12}^*)$. By time-sharing we can achieve all points in the region $R_{12}+R_{123} \le
R_{12}^*$. Conversely from Fano's inequality we have
\begin{align*}
n(R_{12}+R_{123}) &=
H({\bf W}_{12})+H({\bf W}_{123}) \\
&\le H({\bf Y}) \\
&\le \log(2^{kR_{12}^*} + 1)^m \\
&= n(R_{12}^* + \delta_k)
\end{align*}

\noindent where $\delta_k \rightarrow 0$ as $k\rightarrow \infty$. This establishes the fourth component of the lemma.
The remaining components are established in the same manner. We omit the details.
\end{proof}

\noindent It remains to show that the region $\sum_{{\cal I}\subseteq \{1,2,3\}} {\cal C}_{\cal I}$, corresponds to the
region in equation (\ref{eqn:MAC_region}).

\section{Appendix}
\begin{lemma}\label{lem:IgreaterthanH}
Let ${\cal X}_1, {\cal X}_2$ and ${\cal X}_3$ be three sets of random variables satisfying at least one of the
properties ${\cal X}_1 \subseteq {\cal X}_2\cup {\cal X}_3$, ${\cal X}_2 \subseteq {\cal X}_1\cup {\cal X}_3$ or ${\cal
X}_3 \subseteq {\cal X}_1\cup {\cal X}_2$. Let $W$ be a random variable that satisfies $H(W|{\cal X}_i)=0$ for
$i=1,2,3$. Then
\begin{equation*}
I({\cal X}_1;{\cal X}_2;{\cal X}_3) \ge H(W)
\end{equation*}
\end{lemma}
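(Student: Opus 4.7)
My plan is to reduce the claim, via the cover hypothesis and standard entropy identities, to a single invocation of the submodularity of joint entropy on a set-level partition of $\mathcal{X}_1$. Without loss of generality assume $\mathcal{X}_1 \subseteq \mathcal{X}_2 \cup \mathcal{X}_3$, so that $H(\mathcal{X}_1|\mathcal{X}_2,\mathcal{X}_3) = 0$ and the joint $H(\mathcal{X}_1 \cup \mathcal{X}_2 \cup \mathcal{X}_3) = H(\mathcal{X}_2 \cup \mathcal{X}_3)$. Substituting this into the inclusion--exclusion formula for the triple mutual information cancels the 2-set and 3-set terms and leaves
\[
I(\mathcal{X}_1;\mathcal{X}_2;\mathcal{X}_3) = H(\mathcal{X}_1) - H(\mathcal{X}_1|\mathcal{X}_2) - H(\mathcal{X}_1|\mathcal{X}_3).
\]

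Next I would split $H(W)$ off. Because $W$ is a function of $\mathcal{X}_1$, $H(\mathcal{X}_1) = H(W) + H(\mathcal{X}_1|W)$, and because $W$ is also a function of each of $\mathcal{X}_2$ and $\mathcal{X}_3$, conditioning further on $W$ is free: $H(\mathcal{X}_1|\mathcal{X}_i) = H(\mathcal{X}_1|\mathcal{X}_i, W)$ for $i = 2, 3$. The target inequality $I(\mathcal{X}_1;\mathcal{X}_2;\mathcal{X}_3) \ge H(W)$ therefore reduces to
\[
H(\mathcal{X}_1|W) \ge H(\mathcal{X}_1|\mathcal{X}_2, W) + H(\mathcal{X}_1|\mathcal{X}_3, W),
\]
in which $W$ only enters as a conditioning variable.

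To prove this I would exploit the cover condition at the set level. Let $A = \mathcal{X}_1 \cap (\mathcal{X}_2 \setminus \mathcal{X}_3)$, $B = \mathcal{X}_1 \cap (\mathcal{X}_3 \setminus \mathcal{X}_2)$, and $C = \mathcal{X}_1 \cap \mathcal{X}_2 \cap \mathcal{X}_3$; the inclusion $\mathcal{X}_1 \subseteq \mathcal{X}_2 \cup \mathcal{X}_3$ makes $A \cup B \cup C = \mathcal{X}_1$ a disjoint union. Because $A \cup C \subseteq \mathcal{X}_2$, conditioning on $\mathcal{X}_2$ already determines the variables in $A$ and $C$, so
\[
H(\mathcal{X}_1|\mathcal{X}_2, W) = H(B|\mathcal{X}_2, W) \le H(B|A, C, W),
\]
with the inequality following because conditioning on more variables can only reduce entropy. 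Symmetrically, $H(\mathcal{X}_1|\mathcal{X}_3, W) \le H(A|B, C, W)$. Summing these two bounds and applying the chain rule converts the requirement into $H(A,B,C|W) \le H(A,C|W) + H(B,C|W)$, which is immediate from the conditional submodularity of joint entropy, $H(A,B,C|W) + H(C|W) \le H(A,C|W) + H(B,C|W)$. The main obstacle, in my view, is recognizing the three-part partition of $\mathcal{X}_1$ that the cover condition makes available; once that decomposition is in place, two applications of ``conditioning reduces entropy'' and a single submodularity step close the argument.
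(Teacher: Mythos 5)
Your proof is correct, and it takes a genuinely different route from the paper's. Both arguments begin by pulling $H(W)$ out of the triple mutual information using the fact that $W$ is determined by each $\mathcal{X}_i$, but you do this by a direct algebraic simplification (using the cover hypothesis to kill the $H(\mathcal{X}_2,\mathcal{X}_3)$ and $H(\mathcal{X}_1,\mathcal{X}_2,\mathcal{X}_3)$ terms, then writing $H(\mathcal{X}_1)=H(W)+H(\mathcal{X}_1|W)$ and absorbing $W$ into the conditionings), whereas the paper instead rewrites $I(\mathcal{X}_1;\mathcal{X}_2;\mathcal{X}_3)=I(W,\mathcal{X}_1;W,\mathcal{X}_2;W,\mathcal{X}_3)$, expands by inclusion--exclusion, and arrives at $H(W)+I(\mathcal{X}_1;\mathcal{X}_2;\mathcal{X}_3|W)$. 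The real divergence is in how the remaining nonnegativity is established. The paper delegates it to Lemma \ref{lem:positive_I}, which in turn reduces $I(\mathcal{X}_1;\mathcal{X}_2;\mathcal{X}_3|W)$ to the pairwise quantity $I(\mathcal{X}_1\cap\mathcal{X}_3;\mathcal{X}_2\cap\mathcal{X}_3|W)$ via yet another auxiliary Lemma \ref{lem:IgreaterthanW}; you instead partition $\mathcal{X}_1$ into $A$, $B$, $C$ using the cover condition, invoke ``conditioning reduces entropy'' twice, and close with a single application of (conditional) submodularity. Your version is more self-contained and arguably more elementary, since it bottoms out at a single Shannon-type inequality rather than a chain of lemmas about interaction information; what it loses is the paper's modularity, since Lemma \ref{lem:positive_I} is stated in a reusable form. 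One small presentational note: in claiming $H(\mathcal{X}_1|\mathcal{X}_2,W)=H(B|\mathcal{X}_2,W)$ you are implicitly using that $A$ and $C$ are subsets of the random variables in $\mathcal{X}_2$ and hence deterministic functions of it --- this is true by construction, but worth stating explicitly so the chain-rule cancellation is visible.
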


\begin{proof}
\begin{align*}
I({\cal X}_1;{\cal X}_2;{\cal X}_3) &= I(W,{\cal X}_1;W,{\cal X}_2;W,{\cal X}_3) \\
&= H(W,{\cal X}_1)+H(W,{\cal X}_2)+H(W,{\cal X}_3) \\
&\quad\quad -H(W,{\cal X}_1,{\cal X}_2)-H(W,{\cal X}_1,{\cal X}_3)-H(W,{\cal X}_2,{\cal X}_3)+H(W,{\cal X}_1,{\cal X}_2,{\cal X}_3) \\
&= H(W) + H({\cal X}_1|W)+H({\cal X}_2|W)+H({\cal X}_3|W) \\
&\quad\quad -H({\cal X}_1,{\cal X}_2|W)-H({\cal X}_1,{\cal X}_3|W)-H({\cal X}_2,{\cal X}_3|W)+H({\cal X}_1,{\cal X}_2,{\cal X}_3|W) \\
&=H(W)+I({\cal X}_1;{\cal X}_2;{\cal X}_3|W) \\
&\ge H(W)
\end{align*}

\noindent where the last step follows from lemma \ref{lem:positive_I}.
\end{proof}

\begin{lemma}\label{lem:HABC_expansion}
\begin{equation*}
H(A)+H(B)+H(C)=H(A,B,C)+I(A,B;A,C;B,C)+I(A;B;C)
\end{equation*}
\end{lemma}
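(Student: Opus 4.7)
The plan is to prove this identity by direct algebraic expansion using the standard inclusion-exclusion definition of triple mutual information. Recall that for random variables $X,Y,Z$,
\begin{equation*}
I(X;Y;Z) = H(X)+H(Y)+H(Z)-H(X,Y)-H(X,Z)-H(Y,Z)+H(X,Y,Z).
\end{equation*}

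First I would expand $I(A,B;A,C;B,C)$ by taking the three compound sets $(A,B)$, $(A,C)$, $(B,C)$ as the arguments in the definition above. The key simplification is that each pairwise union of these compound sets exhausts $\{A,B,C\}$, so the four joint entropies appearing in the expansion all collapse to $H(A,B,C)$. For instance, $H((A,B),(A,C)) = H(A,B,A,C) = H(A,B,C)$, and similarly for the other two pairs as well as the full triple. This yields
\begin{equation*}
I(A,B;A,C;B,C) = H(A,B)+H(A,C)+H(B,C) - 3H(A,B,C) + H(A,B,C) = H(A,B)+H(A,C)+H(B,C) - 2H(A,B,C).
\end{equation*}

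Next I would expand the ordinary triple mutual information directly,
\begin{equation*}
I(A;B;C) = H(A)+H(B)+H(C) - H(A,B) - H(A,C) - H(B,C) + H(A,B,C),
\end{equation*}
and add the three pieces on the right-hand side of the claim. The pairwise entropies $H(A,B)$, $H(A,C)$, $H(B,C)$ cancel exactly between the two mutual information terms, and the joint entropy contributions combine as $H(A,B,C) - 2H(A,B,C) + H(A,B,C) = 0$, leaving precisely $H(A)+H(B)+H(C)$.

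There is no substantive obstacle: the identity is purely formal and reduces to bookkeeping once one observes that the three compound arguments $(A,B), (A,C), (B,C)$ pairwise unite to the full variable set, which forces massive cancellation. The only point worth flagging explicitly is the collapse of the joint entropies of the compound sets, since this is what distinguishes $I(A,B;A,C;B,C)$ from a generic triple mutual information.
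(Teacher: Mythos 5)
Your computation is correct, and the cancellations work exactly as you describe: the collapse $H\bigl((A,B),(A,C)\bigr)=H\bigl((A,B),(B,C)\bigr)=H\bigl((A,C),(B,C)\bigr)=H(A,B,C)$ is the key observation, and once you have $I(A,B;A,C;B,C)=H(A,B)+H(A,C)+H(B,C)-2H(A,B,C)$, adding $I(A;B;C)$ and $H(A,B,C)$ leaves $H(A)+H(B)+H(C)$ with nothing left over. Interestingly, the paper does not supply a proof at all: its entire argument for this lemma is the single word ``ITIP,'' a reference to the Information Theoretic Inequality Prover, an automated tool that machine-verifies Shannon-type identities. So your approach is genuinely different in the sense that it is an actual hand proof. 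What you lose is the one-line appeal to a trusted black box; what you gain is transparency and independence from external software, and the argument makes visible the structural reason the identity holds (pairwise unions of the compound arguments exhaust the variable set), which is exactly the feature the paper exploits elsewhere. One small caution: the inclusion-exclusion expression for $I(X;Y;Z)$ you start from is a definition of the triple interaction information rather than a theorem, and interaction information can be negative, so it is worth being explicit that you are using that signed quantity as the definition rather than relying on any nonnegativity; your proof never assumes nonnegativity, so this is a matter of presentation, not correctness.
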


\begin{proof}
ITIP
\end{proof}

\begin{lemma}\label{lem:zeroint condsets}
Let $X_1,\dots,X_n$ be a set of mutually independent r.v's. Let ${\cal X}_1,{\cal X}_2$ and ${\cal X}_3$ be three
subsets of these r.v.'s with the property ${\cal X}_1 \cap {\cal X}_2 \cap {\cal X}_3 = \phi$. Then for any r.v. $Y$
\begin{equation*}
H(Y|{\cal X}_1) + H(Y|{\cal X}_2) + H(Y|{\cal X}_3) \ge H(Y).
\end{equation*}
\end{lemma}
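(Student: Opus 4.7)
The plan is to reduce the three-set inequality to repeated application of a submodular inequality for $H(Y\mid\cdot)$ on subsets of $\{X_1,\ldots,X_n\}$. Define $f(S)=H(Y\mid\mathcal{X}_S)$ where $\mathcal{X}_S=\{X_i:i\in S\}$. I first establish the key submodular identity
\begin{equation*}
f(S)+f(T)\ \ge\ f(S\cup T)+f(S\cap T).
\end{equation*}
Writing $f(S)=H(Y,\mathcal{X}_S)-H(\mathcal{X}_S)$, the quantity $f(S)+f(T)-f(S\cup T)-f(S\cap T)$ splits as
\begin{equation*}
\bigl[H(Y,\mathcal{X}_S)+H(Y,\mathcal{X}_T)-H(Y,\mathcal{X}_{S\cup T})-H(Y,\mathcal{X}_{S\cap T})\bigr]-\bigl[H(\mathcal{X}_S)+H(\mathcal{X}_T)-H(\mathcal{X}_{S\cup T})-H(\mathcal{X}_{S\cap T})\bigr].
\end{equation*}
The first bracket is $\ge 0$ by the standard submodularity of Shannon entropy applied to the variable groups $\{Y\}\cup\mathcal{X}_S$ and $\{Y\}\cup\mathcal{X}_T$; the second bracket vanishes because mutual independence of the $X_i$'s makes entropy modular on these subsets, $H(\mathcal{X}_S)=\sum_{i\in S}H(X_i)$.

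Given submodularity, the rest proceeds in two reductions. First I apply the inequality to $\mathcal{X}_1$ and $\mathcal{X}_2$ and drop the nonnegative term $H(Y\mid\mathcal{X}_1\cup\mathcal{X}_2)$ to obtain
\begin{equation*}
H(Y\mid\mathcal{X}_1)+H(Y\mid\mathcal{X}_2)\ \ge\ H(Y\mid\mathcal{X}_1\cap\mathcal{X}_2).
\end{equation*}
Next I apply submodularity again, this time to the pair $\mathcal{X}_1\cap\mathcal{X}_2$ and $\mathcal{X}_3$, invoke the hypothesis $\mathcal{X}_1\cap\mathcal{X}_2\cap\mathcal{X}_3=\phi$ (so that $H(Y\mid\mathcal{X}_1\cap\mathcal{X}_2\cap\mathcal{X}_3)=H(Y\mid\phi)=H(Y)$), and once more discard a nonnegative conditional entropy, yielding $H(Y\mid\mathcal{X}_1\cap\mathcal{X}_2)+H(Y\mid\mathcal{X}_3)\ge H(Y)$. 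Chaining the two inequalities immediately gives the lemma.

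The main obstacle is the submodularity step itself: as a function of arbitrary collections of random variables, $H(Y\mid\cdot)$ is \emph{not} submodular in general (e.g.\ taking correlated ``copies'' of a variable can break it, since then $\mathcal{X}_1\cap\mathcal{X}_2$ and $\mathcal{X}_1\cup\mathcal{X}_2$ no longer carry the information one would guess from their labels). The crux of the argument is that the mutual independence hypothesis on $X_1,\ldots,X_n$ is precisely what forces entropy to be modular on subsets of $\{X_1,\ldots,X_n\}$, and this modularity exactly cancels the second bracket above, leaving only the unconditionally valid submodularity of joint entropy. The empty-intersection hypothesis then simply guarantees that after two reductions we have conditioned on nothing, recovering $H(Y)$ on the right-hand side.
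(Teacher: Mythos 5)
Your proof is correct, and it takes a genuinely different and substantially cleaner route than the paper's. The paper proves the lemma by first further conditioning each term on a complement (e.g.\ $H(Y\mid\mathcal{X}_1)\ge H(Y\mid\mathcal{X}_1,\mathcal{X}_2^c)$), then invoking the triple-entropy expansion $H(A)+H(B)+H(C)=H(A,B,C)+I(A,B;A,C;B,C)+I(A;B;C)$ (Lemma~\ref{lem:HABC_expansion}, which the paper ``proves'' by citing the ITIP software), followed by a delicate inclusion--exclusion bookkeeping of entropies over set differences and intersections made possible by independence, and finally an appeal to Lemma~\ref{lem:IgreaterthanH} to lower-bound a triple mutual information by $H(Y)$. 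Your argument replaces all of this with the single observation that, because joint entropy is submodular and is modular on the independent family $\{X_i\}$, the map $S\mapsto H(Y\mid\mathcal{X}_S)$ is submodular on the index lattice. Two applications of submodularity, each time discarding a nonnegative conditional entropy and finally using $\mathcal{X}_1\cap\mathcal{X}_2\cap\mathcal{X}_3=\phi$ so that $f(\phi)=H(Y)$, then give the result. This is shorter, self-contained (it needs only the standard fact $I(A\setminus B;B\setminus A\mid A\cap B)\ge 0$, not the computer-verified triple expansion), and visibly generalizes to any number of subsets with empty overall intersection, whereas the paper's route is hard-wired to three. The only point to watch is the one you already flag: submodularity of $S\mapsto H(Y\mid\mathcal{X}_S)$ fails for arbitrary conditioning families, and it is precisely modularity of $H$ on the independent $X_i$'s that rescues it; your proof makes this dependence transparent in a way the paper's does not.
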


\begin{proof}
\begin{align*}
&H(Y|{\cal X}_1) + H(Y|{\cal X}_2) + H(Y|{\cal X}_3) \\
&\quad\quad\ge H(Y|{\cal X}_1,{\cal X}_2^c) + H(Y|{\cal X}_2,{\cal X}_3^c) + H(Y|{\cal X}_3,{\cal X}_1^c) \\
&\quad\quad= H(Y,{\cal X}_1,{\cal X}_2^c) + H(Y,{\cal X}_2,{\cal X}_3^c) + H(Y,{\cal X}_3,{\cal X}_1^c) - H({\cal X}_1,{\cal X}_2^c) - H({\cal X}_2,{\cal X}_3^c) - H({\cal X}_3,{\cal X}_1^c) \\
&\quad\quad= H(Y,{\cal X}_1,{\cal X}_2,{\cal X}_3) + I(Y,{\cal X}_1,{\cal X}_2^c;Y,{\cal X}_2,{\cal X}_3^c;Y,{\cal X}_3,{\cal X}_1^c) \\
&\quad\quad\quad\quad+ I(Y,{\cal X}_1,{\cal X}_2,{\cal X}_3;Y,{\cal X}_1,{\cal X}_2,{\cal X}_3;Y,{\cal X}_1,{\cal X}_2,{\cal X}_3) \\
&\quad\quad\quad\quad- H({\cal X}_1,{\cal X}_2^c) - H({\cal X}_2,{\cal X}_3^c) - H({\cal X}_3,{\cal X}_1^c) \\
&\quad\quad= 2H(Y,{\cal X}_1,{\cal X}_2,{\cal X}_3) + I(Y,{\cal X}_1,{\cal X}_2^c;Y,{\cal X}_2,{\cal X}_3^c;Y,{\cal X}_3,{\cal X}_1^c) \\
&\quad\quad\quad\quad- 2H({\cal X}_1\backslash {\cal X}_2 \cup {\cal X}_3) - 2H({\cal X}_2\backslash {\cal X}_1 \cup {\cal X}_3)- 2H({\cal X}_3\backslash {\cal X}_1 \cup {\cal X}_2) \\
&\quad\quad\quad\quad - 2H({\cal X}_1 \cap {\cal X}_2 \backslash {\cal X}_3) - 2H({\cal X}_1 \cap {\cal X}_3 \backslash {\cal X}_2) - 2H({\cal X}_2 \cap {\cal X}_3 \backslash {\cal X}_1) - 3H({\cal X}_1 \cap {\cal X}_2 \cap {\cal X}_3) \\
&\quad\quad= 2H(Y,{\cal X}_1,{\cal X}_2,{\cal X}_3) + I(Y,{\cal X}_1,{\cal X}_2^c;Y,{\cal X}_2,{\cal X}_3^c;Y,{\cal X}_3,{\cal X}_1^c) \\
&\quad\quad\quad\quad- 2H({\cal X}_1\backslash {\cal X}_2 \cup {\cal X}_3) - 2H({\cal X}_2\backslash {\cal X}_1 \cup {\cal X}_3)- 2H({\cal X}_3\backslash {\cal X}_1 \cup {\cal X}_2) \\
&\quad\quad\quad\quad - 2H({\cal X}_1 \cap {\cal X}_2 \backslash {\cal X}_3) - 2H({\cal X}_1 \cap {\cal X}_3 \backslash {\cal X}_2) - 2H({\cal X}_2 \cap {\cal X}_3 \backslash {\cal X}_1) - 2H({\cal X}_1 \cap {\cal X}_2 \cap {\cal X}_3) \\
&\quad\quad= 2H(Y,{\cal X}_1,{\cal X}_2,{\cal X}_3) + I(Y,{\cal X}_1,{\cal X}_2^c;Y,{\cal X}_2,{\cal X}_3^c;Y,{\cal X}_3,{\cal X}_1^c) - 2H({\cal X}_1,{\cal X}_2,{\cal X}_3) \\
&\quad\quad\ge I(Y,{\cal X}_1,{\cal X}_2^c;Y,{\cal X}_2,{\cal X}_3^c;Y,{\cal X}_3,{\cal X}_1^c) \\
&\quad\quad\ge H(Y)
\end{align*}

\noindent where the third step follows from lemma \ref{lem:HABC_expansion}, the fourth from a set expansion made
possible by the mutual independence of the underlying r.v.'s $X_1,\dots,X_n$, the fifth from the property ${\cal X}_1
\cap {\cal X}_2 \cap {\cal X}_3 = \ \phi$, the sixth by a set relationship, and the eighth by lemma
\ref{lem:IgreaterthanH}.
\end{proof}

\begin{lemma}\label{lem:positive_I}
Let ${\cal X}_1, {\cal X}_2$ and ${\cal X}_3$ be sets of random variables. If either ${\cal X}_1 \subseteq {\cal X}_2
\cup {\cal X}_3$, ${\cal X}_2 \subseteq {\cal X}_1 \cup {\cal X}_3$ or ${\cal X}_3 \subseteq {\cal X}_1 \cup {\cal
X}_2$ then for any r.v. $W$,
\begin{equation*}
I({\cal X}_1;{\cal X}_2;{\cal X}_3 | W ) \ge 0.
\end{equation*}
\end{lemma}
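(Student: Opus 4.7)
The plan is to exploit the symmetry of the triple mutual information in its three arguments and, without loss of generality, assume ${\cal X}_3 \subseteq {\cal X}_1 \cup {\cal X}_2$. My first step is to rewrite the triple conditional mutual information using the identity
\[
I({\cal X}_1;{\cal X}_2;{\cal X}_3|W) = I({\cal X}_3;{\cal X}_1|W) - I({\cal X}_3;{\cal X}_1|{\cal X}_2,W),
\]
and then expand each of the two $I$'s into differences of entropies. The key simplification is that the subset hypothesis forces $H({\cal X}_3|{\cal X}_1,{\cal X}_2,W)=0$, which collapses the resulting four-term expression down to
\[
I({\cal X}_1;{\cal X}_2;{\cal X}_3|W) = H({\cal X}_3|W) - H({\cal X}_3|{\cal X}_1,W) - H({\cal X}_3|{\cal X}_2,W).
\]

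Next I would split ${\cal X}_3$ along its intersection with ${\cal X}_1$: set ${\cal A} := {\cal X}_3 \cap {\cal X}_1$ and ${\cal B} := {\cal X}_3 \setminus {\cal A}$. The subset hypothesis forces ${\cal B} \subseteq {\cal X}_2$, so ${\cal X}_3 = {\cal A} \sqcup {\cal B}$ with ${\cal A} \subseteq {\cal X}_1$ and ${\cal B} \subseteq {\cal X}_2$. This partition lets me rewrite each of the three entropies in a form where cancellation is possible: by the chain rule $H({\cal X}_3|W) = H({\cal A}|W) + H({\cal B}|{\cal A},W)$; since ${\cal A}$ is determined by ${\cal X}_1$, we get $H({\cal X}_3|{\cal X}_1,W) = H({\cal B}|{\cal X}_1,W) \le H({\cal B}|{\cal A},W)$ by monotonicity of entropy under further conditioning; and symmetrically $H({\cal X}_3|{\cal X}_2,W) = H({\cal A}|{\cal X}_2,W) \le H({\cal A}|W)$.

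Substituting these three relations into the collapsed expression yields
\[
I({\cal X}_1;{\cal X}_2;{\cal X}_3|W) \;\ge\; H({\cal A}|W) + H({\cal B}|{\cal A},W) - H({\cal B}|{\cal A},W) - H({\cal A}|W) \;=\; 0,
\]
which is what we want. The only real subtlety I anticipate is the bookkeeping with sets of random variables when forming ${\cal A}$ and ${\cal B}$ as an \emph{actual} disjoint partition of ${\cal X}_3$ rather than just a covering; but the definition ${\cal A} = {\cal X}_3 \cap {\cal X}_1$, ${\cal B} = {\cal X}_3 \setminus {\cal A}$ handles this cleanly. Everything else is just the chain rule and ``conditioning reduces entropy.''
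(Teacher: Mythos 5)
Your proof is correct, and it arrives at exactly the same intermediate expression as the paper's: once you use the containment to kill $H({\cal X}_3|{\cal X}_1,{\cal X}_2,W)$, you land on
\begin{equation*}
I({\cal X}_1;{\cal X}_2;{\cal X}_3|W) \;=\; H({\cal X}_3|W) - H({\cal X}_3|{\cal X}_1,W) - H({\cal X}_3|{\cal X}_2,W),
\end{equation*}
which is algebraically identical to the paper's line $I({\cal X}_1;{\cal X}_3|W)+I({\cal X}_2;{\cal X}_3|W)-H({\cal X}_3|W)$. From there the two arguments diverge in presentation but not in substance: the paper bounds $I({\cal X}_1;{\cal X}_3|W)\ge H({\cal X}_1\cap{\cal X}_3|W)$ and $I({\cal X}_2;{\cal X}_3|W)\ge H({\cal X}_2\cap{\cal X}_3|W)$ by an auxiliary lemma (Lemma~\ref{lem:IgreaterthanW}) and then recognizes the remainder as $I({\cal X}_1\cap{\cal X}_3;{\cal X}_2\cap{\cal X}_3|W)\ge 0$, whereas you split ${\cal X}_3$ into the disjoint pair ${\cal A}={\cal X}_3\cap{\cal X}_1$ and ${\cal B}={\cal X}_3\setminus{\cal A}\subseteq{\cal X}_2$ and finish with just the chain rule and ``conditioning reduces entropy.'' Your version is a bit more self-contained, since it doesn't invoke the auxiliary lemma; the paper's version is slightly sharper in that it identifies the explicit nonnegative quantity $I({\cal X}_1\cap{\cal X}_3;{\cal X}_2\cap{\cal X}_3|W)$ as a lower bound rather than just showing nonnegativity. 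One small presentation note: your final display shows the lower bound collapsing to $0$ on the nose; it might be worth remarking that what you actually exhibit is $\ge I({\cal A};{\cal B}|W)\ge 0$, so your bound is in fact the same as the paper's up to the disjointification of the intersection sets.
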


\begin{proof}
Assume without loss of generality that the first {\it containment} property ${\cal X}_3 \subseteq {\cal X}_1 \cup {\cal
X}_2$ holds. Then
\begin{align*}
&I({\cal X}_1,{\cal X}_2,{\cal X}_3 | W ) \\
&\quad\quad = H({\cal X}_1|W)+H({\cal X}_2|W)+H({\cal X}_3|W) \\
&\quad\quad\quad\quad-H({\cal X}_1,{\cal X}_2|W)-H({\cal X}_1,{\cal X}_3|W)-H({\cal X}_2,{\cal X}_3|W)+H({\cal X}_1,{\cal X}_2,{\cal X}_3|W) \\
&\quad\quad = H({\cal X}_1|W)+H({\cal X}_2|W)+H({\cal X}_3|W)-H({\cal X}_1,{\cal X}_2|W)-H({\cal X}_1,{\cal X}_3|W) \\
&\quad\quad = I({\cal X}_1;{\cal X}_3|W)+I({\cal X}_2;{\cal X}_3|W)-H({\cal X}_3|W) \\
&\quad\quad \ge H({\cal X}_1 \cap {\cal X}_3|W) + H({\cal X}_2 \cap {\cal X}_3|W) - H({\cal X}_1 \cap {\cal X}_3, {\cal
X}_2 \cap {\cal X}_3|W) \\
&\quad\quad =I({\cal X}_1 \cap {\cal X}_3;{\cal X}_2 \cap {\cal X}_3|W) \\
&\quad\quad \ge 0.
\end{align*}

\noindent The second step follows from the containment property ${\cal X}_1 \subseteq {\cal X}_2 \cup {\cal X}_3$. The
first term in the fourth step follows by applying lemma \ref{lem:IgreaterthanW} with $W=W$, $X = {\cal X}_1$, $Y =
{\cal X}_3$ and $Z = {\cal X}_1\cap {\cal X}_3$, the second term by applying the same lemma with $W=W$, $X = {\cal
X}_2$, $Y = {\cal X}_3$ and $Z = {\cal X}_2\cap {\cal X}_3$. The third term in the third and fourth steps are equal by
the containment property.
\end{proof}

\begin{lemma}\label{lem:IgreaterthanW}
If $H(Z|X)=0$ and $H(Z|Y)=0$ then $I(X;Y|W)\ge H(Z|W)$.
\end{lemma}

\begin{proof}
\begin{align*}
I(X;Y|W)&=I(X,Z;Y,Z|W)\\
&=H(X,Z|W)+H(Y,Z|W)-H(X,Y,Z|W)\\
&=H(Z|W)+H(X|W,Z)+H(Z|W)+H(Y|W,Z)-H(Z|W)-H(X,Y|W,Z)\\
&=H(Z|W)+I(X;Y|W,Z)\\
&\ge H(Z|W).
\end{align*}
\end{proof}

\begin{lemma}\label{lem:IgreaterthanW}
If $H(Z|X)=0$ and $H(Z|Y)=0$ then $I(X;Y|W)\ge H(Z|W)$.
\end{lemma}

\end{document}